\newtheorem{proposition}{Proposition}
\newtheorem{theorem}{Theorem}
\newtheorem{lemma}{Lemma}
\newtheorem{corollary}{Corollary}
\theoremstyle{definition}
\newtheorem{definition}{Definition}
\newcommand{\frecc}{\rightarrow} 
\newcommand{\de}{\,{\rm d}} 
\newcommand{\lft}{\left(} 
\newcommand{\rgt}{\right)} 
\newcommand{\lfq}{\left[} 
\newcommand{\rgq}{\right]}
\newcommand{\R}{\mathbb R} 
\newcommand{\T}{\mathbb{T}} 
\newcommand{\C}{{\mathbb C}} 
\newcommand{\Z}{\mathbb{Z}} 
\newcommand{\half}{\frac{1}{2}}
\newcommand{\ff}{\mathcal{F}} 
\newcommand{\hG}{\widehat{G}} 
\newcommand{\hY}{Y} 
\newcommand{\hlam}{\hat{\lambda}} 
\newcommand{\WH}{\mathfrak{H}_G} 
\newcommand{\hh}{\mathcal{H}} 
\newcommand{\hi}{\hh} 
\newcommand{\hhat}{\hat{\hh}} 
\newcommand{\kk}{\mathcal{K}} 
\newcommand{\vv}{\mathcal{V}} 
\newcommand{\id}{\mathbbm{1}} 
\newcommand{\sh}{\mathcal{S(H)}} 
\newcommand{\trh}{\mathcal{T}(\mathcal{H})} 
\newcommand{\trhh}{\mathcal{T}(\hh\otimes\hh)} 
\newcommand{\lh}{\mathcal{L(H)}} 
\newcommand{\ti}[1]{\mathcal{T} \left( #1 \right)} 
\newcommand{\lam}{\lambda}
\newcommand{\ip}[2]{\left\langle\,#1\,|\,#2\,\right\rangle} 
\newcommand{\scal}[2]{\left\langle\,#2\,|\,#1\,\right\rangle} 
\newcommand{\ketbra}[2]{|#1\rangle\langle #2|} 
\newcommand{\kb}{\ketbra} 
\newcommand{\no}[1]{\left\|#1\right\|} 
\newcommand{\tr}[1]{\mathrm{tr}\left[ #1 \right]} 
\newcommand{\trped}[2]{{\rm tr}_{#1} \left[ #2 \right]} 
\newcommand{\Ao}{\mathsf{A}} 
\newcommand{\Bo}{\mathsf{B}} 
\newcommand{\Co}{\mathsf{C}} 
\newcommand{\Ro}{\mathsf{R}} 
\newcommand{\Eo}{\mathsf{E}} 
\newcommand{\Fo}{\mathsf{F}} 
\newcommand{\Mo}{\mathsf{M}} 
\newcommand{\So}{\mathsf{S}} 
\newcommand{\ii}{\mathcal{I}} 
\newcommand{\jj}{\mathcal{J}} 
\newcommand{\mm}{\mathcal{M}} 
\newcommand{\bor}[1]{\mathcal{B}(#1)} 
\newcommand{\borel}[1]{{\mathcal B}\lft \real\rgt}
\newcommand{\ldue}[1]{L^2(#1)} 
\newcommand{\luno}[1]{L^1(#1)} 
\newcommand{\va}{\mathbf{a}} 
\newcommand{\vb}{\mathbf{b}} 
\newcommand{\vr}{\mathbf{r}} 
\newcommand{\vsigma}{\boldsymbol{\sigma}} 
\begin{document}

\setlength\arraycolsep{2pt}

\title[Sequential measurements]{Sequential measurements of conjugate observables}

\author[Carmeli]{Claudio Carmeli}
\address{Claudio Carmeli, Dipartimento di Fisica, Universit\`a di Genova, and Istituto Nazionale di Fisica Nucleare, Sezione di Genova, Via Dodecaneso 33, 16146 Genova, Italy}
\email{claudio.carmeli@gmail.com}

\author[Heinosaari]{Teiko Heinosaari}
\address{Teiko Heinosaari, Turku Centre for Quantum Physics, Department of Physics and Astronomy, University of Turku, 20014 Turku, Finland}
\email{teiko.heinosaari@utu.fi}

\author[Toigo]{Alessandro Toigo}
\address{Alessandro Toigo, Dipartimento di Matematica ``Francesco Brioschi'', Politecnico di Milano, Piazza Leonardo da Vinci 32, 20133 Milano, and Istituto Nazionale di Fisica Nucleare, Sezione di Milano, Via Celoria 16, 20133 Milano, Italy}
\email{alessandro.toigo@polimi.it}

\maketitle{}

\begin{abstract}
We present a unified treatment of sequential measurements of two conjugate observables.
Our approach is to derive a mathematical structure theorem for all the relevant covariant instruments.
As a consequence of this result, we show that every Weyl-Heisenberg covariant observable can be implemented as a sequential measurement of two conjugate observables.
This method is applicable both in finite and infinite dimensional Hilbert spaces, therefore covering sequential spin component measurements as well as position-momentum sequential measurements.
\end{abstract}

\section{Introduction}\label{sec:intro}

A sharp measurement of position affects the state of a quantum system in a dramatic way; any subsequent measurement can give only redundant information on the initial state of the system. 
On the other hand, if we perform a measurement which does not disturb the system at all, then nothing on the initial state can be inferred from the measurement outcome statistics. 
This latter consequence of quantum theory is usually referred as ``no information without disturbance''.

Obviously, there is no need to restrict to these two extremes and actually the intermediate cases are more interesting and practical.  
In particular, in a sequential measurement the aim is to perform several measurements in succesion and gather additional information on the initial state at each step.
The benefit of sequential measurements is that we also get correlations and not just separate measurement outcome distributions. 
For this reason a sequential measurement scheme can give more information on the initial state than separate measurements together.

Peforming measurements sequentially can be seen as a method to combine some simple measurements in order to realize a more complicated measurement.
Especially in quantum information theory, it has become evident that sharp observables (described by projection valued measures) are not enough for all purposes \cite{QCQI00}. 
One needs also more involved observables, generally described by positive operator valued measures \cite{OQP97,QTOS76,PSAQT82}, in order to perform various tasks.
This opens up the question how to realize these more complicated measurements.

One particular class of interesting observables consist of \emph{covariant phase space observables}.
In our context, covariance refers to a certain type of symmetry property with respect to the (finite or infinite) Weyl-Heisenberg group.
The most prominent example of a covariant phase space observable is the Q-function, which was first introduced by Husimi for infinite dimensional systems \cite{Husimi40} and later studied also in the connection of finite dimensional systems \cite{OpBuBaDr95}.

It is known that some of the covariant phase space observables are \emph{informationally complete} \cite{AlPr77a}, meaning that the obtained measurement outcome distribution determines the initial state completely. 
(Conditions for informational completeness have been discussed e.g. in \cite{CaDeLaLe00} for infinite dimensional and in \cite{DaPeSa04} for finite dimensional cases, respectively.)
Informational completeness is the main reason for making the class of covariant phase space observables so interesting.
An additional impetus is coming from the special role of the Weyl-Heisenberg group in the topic of SIC-POVMs \cite{ReBlScCa04}.
In fact, most of the known SIC-POVMs are covariant phase space observables \cite{Appleby05}, \cite{Flammia06}. 

Additional physical insight can be gained when we recognize the connection of covariant phase space observables to joint measurements of conjugate observables.
It is known that a covariant phase space observable in infinite dimension gives marginals which are approximate position and momentum observables \cite{QTOS76},\cite{Holevo73} and it can therefore be interpreted as a joint measurement of \emph{unsharp} position and momentum observables \cite{Busch85}.
This kind of joint measurement is limited, of course, by the Heisenberg's uncertainty relation \cite{BuHeLa07}.
It is also known that a suitable sequential scheme of an unsharp position measurement followed by a (sharp) momentum measurement yields a covariant phase space observable \cite{OQP97}. 
 
In this work we give a systematic treatment of sequential measurements of conjugated pairs and we demonstrate that in this way one can implement \emph{all} covariant phase space observables.
Our investigation is formulated by starting from a (locally compact and second countable) abelian group $G$ and its dual group $\hG$, then passing to the associated Weyl-Heisenberg group $\WH$.
The results are therefore general and applicable, in particular, to the common situations of the position-momentum pair and mutually unbiased bases. 

\emph{Outline.} The necessary basic concepts are shortly reviewed in Section \ref{sec:sequential}.
In Section \ref{sec:receipt} we explain a scheme how an instrument with appropriate covariance properties leads to an implementation of a covariant phase space observable.
In Section \ref{sec:structure} we derive a general structure theorem for these covariant instruments, and as a corollary this leads to the conclusion that all covariant phase space observables can be realized with this sequential method.
Finally, in Section \ref{sec:examples} we illustrate the results in two concrete cases of position-momentum and orthogonal spin components.
Section \ref{app:proof:th:gen} contains the proofs for the most technical parts of the paper.

\emph{Notations.} In the following $\hi$ is a Hilbert space (and we always assume that our Hilbert spaces are separable).
We denote by $\lh$ and $\trh$ the Banach spaces of bounded operators and trace class operators on $\hi$, respectively. 
We let $\no{\cdot}_\infty$ be the uniform norm in $\lh$, and $\no{\cdot}_1$ be the trace class norm in $\trh$. 
The cone of positive elements in $\trh$ is denoted by $\trh_+$, and $\sh$ is the set of states on $\hh$, i.~e., the convex closed subset of trace $1$ elements in $\trh_+$.

\section{Instruments and sequential measurements}\label{sec:sequential}

The mathematical framework for sequential measurements was introduced by Davies and Lewis in \cite{DaLe70}.
In the following we briefly summarize the essential concepts.

A linear mapping $\Phi:\trh\to\trh$ is an \emph{operation} if it is completely positive and satisfies $0\leq \tr{\Phi(\varrho)}\leq 1$ for all $\varrho\in\sh$.
An operation $\Phi$ describes a conditional state change in the following way: if the initial state is $\varrho$, the final (unnormalized) state is $\Phi(\varrho)$, provided this is a nonzero operator. 
The number $\tr{\Phi(\varrho)}$ is the probability for the occurrence of the particular event associated with $\Phi$.
An operation $\Phi$ defines its dual mapping $\Phi^\ast$ via the formula
\begin{equation*}
\tr{\varrho\Phi^\ast(A)} = \tr{\Phi(\varrho)A} \qquad \forall \varrho\in\sh,A\in\lh \, .
\end{equation*}
The dual operation $\Phi^\ast$ acts on $\lh$ and it describes the same event than $\Phi$ but in the Heisenberg picture.

In a measurement process, the relevant events are of the type `the measurement gave an outcome belonging to a set $X$'. 
To describe all the corresponding conditional state changes, let $\Omega$ be the set consisting of all measurement outcomes. 
The set $\Omega$ is assumed to be a locally compact topological space, which is Hausdorff and satisfies the second axiom of countability (lcsc space, in short). 
The Borel $\sigma$-algebra of $\Omega$ is denoted by $\bor{\Omega}$, and the Borel sets are identified with the possible events in the measurement.

\begin{definition}\label{def:ins}
An \emph{instrument} on $\Omega$ is a mapping $\ii : X \mapsto \ii_X$ from $\bor{\Omega}$ to the set of operations on $\trh$ such that for each state  $\varrho\in\sh$, the mapping $X\mapsto\tr{\ii_X(\varrho)}$ is a probability measure.
\end{definition}

An instrument gives a description of two things: measurement outcome probabilities and conditional state changes. 
It is the general form of a channel yielding classical information together with a quantum output.
On the other hand, it can be shown that every instrument arises from a measurement process \cite{Ozawa84}.

After a measurement has been performed, we can directly (i.~e.~without further measurements) observe only the measurement outcome distribution. 
An instrument $\ii$ on $\Omega$ determines a unique \emph{associated observable} $\Eo^{\ii}$, given by
\begin{equation}
\Eo^{\ii}(X)=\ii^\ast_X(\id) \qquad \forall X\in\bor{\Omega} \, .
\end{equation}
If the system is initially in a state $\varrho$, then the measurement gives an outcome from a set $X$ with probability
\begin{equation}
\tr{\varrho\Eo^{\ii}(X)}=\tr{\varrho\ii^{\ast}_X(\id)}=\tr{\ii_X(\varrho)} \, .
\end{equation}
Mathematically speaking, the mapping $X\mapsto\Eo^{\ii}(X)$ is a \emph{positive operator valued measure} (POVM). 
We recall the following standard definition \cite{OQP97,QTOS76,PSAQT82}.
\begin{definition}
An {\em observable} on $\Omega$ is a mapping $\Eo : \bor{\Omega} \to \lh$ such that for all states $\varrho\in\sh$ the mapping $X\mapsto \tr{\varrho\Eo (X)}$ is a probability measure. 
The observable $\Eo$ is {\em sharp} if $\Eo(X\cap Y) = \Eo(X) \Eo(Y)$ for all $X,Y$ (i.~e., if and only if $\Eo$ is a projection valued measure).
\end{definition}

Suppose that two measurements, described by instruments $\ii$ and $\ii'$, are performed sequentially.
This means that the second measurement is performed on the perturbed state.
The operation corresponding to the event 'the first measurement led to a measurement outcome from a set $X$ and the second measurement led to a measurement outcome from a set $Y$' is the composite mapping $\ii'_Y\circ \ii_X$.
As shown in \cite{DaLe70}, there exists a unique instrument $\jj$ on $\bor{\Omega\times\Omega'}$ such that $\jj_{X\times Y}=\ii'_Y\circ \ii_X$ for all $X\in\bor{\Omega}$ and $Y\in\bor{\Omega'}$; this is called the {\em composition of $\ii'$ following $\ii$} and it gives the mathematical description of the sequential measurement. 

The observable $\Eo^{\jj}$ on $\Omega\times\Omega'$, associated to the composition instrument $\jj$, satisfies
\begin{equation}\label{eq:joint-obs}
\Eo^{\jj} (X\times Y) = \jj_{X\times Y}^\ast(\id) = \ii_X^\ast \left[ \Eo^{\ii'} (Y) \right] 
\end{equation}
for all $X\in\bor{\Omega}$ and  $Y\in\bor{\Omega'}$.
We observe that $\Eo^{\jj}$ depends on $\ii'$ only through the associated observable $\Eo^{\ii'}$. 
The marginal observables are 
\begin{equation*}
\Eo^{\jj}(X \times \Omega') = \ii_X^\ast (\id) = \Eo^{\ii}(X) \quad \forall X\in\bor{\Omega}
\end{equation*}
and
\begin{equation*}
\Eo^{\jj}(\Omega \times Y) = \ii_\Omega^\ast \left[ \Eo^{\ii'}(Y) \right] \quad \forall Y\in\bor{\Omega'} \, .
\end{equation*}
The fact that the first measurement disturbs the second measurement is manifested in the difference of the operators $\ii_\Omega^\ast \left[ \Eo^{\ii'}(Y) \right]$ and  $\Eo^{\ii'}(Y)$.
If the Hilbert space $\hi$ is finite dimensional and the set of outcomes $\Omega'$ is finite, then one can even quantify the least amount of disturbance induced on $\Eo^{\ii'}$ by an $\Eo^{\ii}$-measurement \cite{HeWo10}. 

Clearly, a sequential measurement scheme is motivated only if we can gain additional information on the initial state at each step.
So, after the first measurement is performed, is it possible to learn something more on the initial state by performing another measurement?
In particular, does the joint observable $\Eo^{\jj}$ defined in eq.~\eqref{eq:joint-obs} give more information than the first observable $\Eo^{\ii}$ alone?
The answer evidently depends on the way the first measurement was carried out, i.~e., it depends on the structure of the instrument $\ii$.
However, some observables allow \emph{only} instruments that trivialize all subsequent measurements.
For instance, if an observable $\Eo$ consists of countable number of rank-1 operators, then any measurement of $\Eo$ disturbs the initial state of the system in a way that all subsequent measurements can only give redundant information \cite{HeWo10}.
As another example, suppose that each $\Eo(X)$ is a projection and that the set $\{\Eo(X) \mid X \in\bor{\Omega} \}$ generates a maximal abelian von Neumann subalgebra of $\lh$.  
It then follows that any observable $\Fo$ that satisfies the marginal condition $\Fo(X\times \Omega')=\Eo(X)$ for all $X\in\bor{\Omega}$ can actually be obtained from $\Eo$ by smearing it \cite{JePu07}.
In physical terms these examples mean that in order to have a useful sequential measurement, one has to allow additional imprecision in the first measurement to make it less violent.

\section{Sequential measurement of conjugate observables}\label{sec:receipt}

In this section we first explain a general setup (Subsec. \ref{sec:setup}) and then specify it in the case of conjugate observables (Subsec. \ref{sec:receipt-conjugate}).
As it will be then explained, the Weyl-Heisenberg group is the most convenient tool in our investigation (Subsec. \ref{sec:weyl}). 

\subsection{General setup}\label{sec:setup}

Suppose we try to study the system by performing a sequential measurement of two observables. 
It seems reasonable to choose some obsevables which are, in some sense, very different from each other, as then they possibly lead to a complete picture of the state of the system. 
We are thus motivated to consider a pair of observables with `reciprocal' covariance properties.
We will specify the structure of canonically conjugated observables in Subsec. \ref{sec:receipt-conjugate}, but we first explain the general setup which does not depend on the detailed structure but only on the interplay between various symmetry conditions.

We assume that an observable $\Ao$ is based on a lcsc group $G$ and another observable $\Bo$ is based on a lcsc group $H$.
We further assume that there are unitary representations $U$ of $G$ and $V$ of $H$ such that
\begin{equation}\label{eq:A-covar}
U_g \Ao(X) U^\ast_g = \Ao(gX) \, , \quad V_{h} \Ao(X) V^\ast_{h} = \Ao(X) 
\end{equation} 
and
\begin{equation}\label{eq:B-covar}
V_{h} \Bo(Y) V^\ast_{h} = \Bo(hY) \, , \quad U_g \Bo(Y) U^\ast_g = \Bo(Y)
\end{equation}
for all $X\in \bor{G}$, $Y\in \bor{H}$, $g\in G$, $h\in H$.
We have denoted $gX=\{ gx \mid x \in X\}$ and $hY=\{ hy \mid y \in Y\}$.
The conditions \eqref{eq:A-covar}-\eqref{eq:B-covar} mean that $\Ao$ is $U$-covariant and $V$-invariant, while $\Bo$ is $V$-covariant and $U$-invariant.

We would like to have a sequential measurement scheme which provides the measurement outcome distributions of $\Ao$ and $\Bo$.
However, depending on $\Ao$ and $\Bo$, their exact sequential realization may be impossible (see the end of Sec.\ref{sec:sequential}).
For this reason, we will first only concentrate on the symmetry properties \eqref{eq:A-covar}-\eqref{eq:B-covar} and hope that this gives, if not $\Ao$ and $\Bo$, at least something quite similar.

As a mathematical problem, our task is to find a suitable instrument describing this measurement scheme.
Suppose there exists an instrument $\ii$ based on $G$ and satisfying
\begin{equation}\label{eq:ins-covar}
\ii_{gX}(\varrho) = U_g \ii_X ( U^\ast_g \varrho U_g ) U^\ast_g
\end{equation}
and
\begin{equation}\label{eq:ins-invar}
\ii_{X}(\varrho) = V_{h} \ii_X ( V^\ast_{h} \varrho V_{h} )   V^\ast_h
\end{equation}
for all $X\in \bor{G}$, $g\in G$, $h\in H$ and $\varrho\in \sh$. 
These properties are motivated by the $U$-covariance and $V$-invariance of $\Ao$.
The conditions \eqref{eq:ins-covar}-\eqref{eq:ins-invar} can be merged into the following single condition
\begin{equation}\label{eq:ins-UV}
\ii_{gX}(\varrho) = U_g V_h \ii_X (V^\ast_h U^\ast_g \varrho U_g V_h) V^\ast_h U^\ast_g \, .
\end{equation}
We then consider a sequential measurement in which the first measurement is described by $\ii$ and it is followed by any measurement of $\Bo$. 
Hence, the observable $\Co$ describing the sequential measurement satisfies
\begin{equation}\label{eq:G-seq}
\Co(X\times Y) = \ii^\ast_X [\Bo(Y)] \qquad X\in\bor{G}, Y\in\bor{H} \, .
\end{equation}  
We denote the marginals of $\Co$ by $\widetilde{\Ao}$ and $\widetilde{\Bo}$, i.~e.,
\begin{eqnarray}
\widetilde{\Ao}(X) &=& \Co(X\times H) = \ii^\ast_X(\id) \label{eq:marg-A}\\
\widetilde{\Bo}(Y) &=& \Co(G\times Y) = \ii^\ast_G[\Bo(Y)] \, . \label{eq:marg-B}
\end{eqnarray}
Generally, $\widetilde{\Ao} \neq \Ao$ (since we have not required $\Ao$ is associated to $\ii$) and $\widetilde{\Bo}\neq\Bo$ (since the first measurement disturbs the system). 

The crucial point is that the symmetry properties of $\ii$ and $\Bo$ guarantee that the obtained measurement outcome distributions have the desired symmetries. 
Indeed, $\Co$ satisfies
\begin{equation*}
U_g V_{h} \Co (X\times Y) V_{h}^\ast U_g^\ast = \Co (gX \times hY)
\end{equation*}
for all $X\in \bor{G}$, $Y\in \bor{H}$, $g\in G$, $h\in H$. In particular, its marginal observables $\widetilde{\Ao}$ and $\widetilde{\Bo}$ have the same symmetry properties \eqref{eq:A-covar}-\eqref{eq:B-covar} as $\Ao$ and $\Bo$, respectively. 
For the first marginal $\widetilde{\Ao}$ we get
\begin{equation}\label{eq:A-unsharp-covar}
U_g \widetilde{\Ao}(X) U^\ast_g = \widetilde{\Ao}(gX) \qquad V_{h} \widetilde{\Ao}(X) V^\ast_{h} = \widetilde{\Ao}(X)
\end{equation}
for all $X\in\bor{G}$, $g\in G$, $h\in H$, and for the second marginal $\widetilde{\Bo}$ we get
\begin{equation}\label{eq:B-unsharp-covar}
U_g \widetilde{\Bo}(Y) U^\ast_g = \widetilde{\Bo}(Y) \qquad V_{h} \widetilde{\Bo}(Y) V^\ast_{h} = \widetilde{\Bo}(hY)
\end{equation}
for all $Y\in\bor{H}$, $g\in G$, $h\in H$.

In summary, we have seen that one simple condition \eqref{eq:ins-UV} on instruments leads to favorable symmetry properties on the obtained measurement outcome distributions.
The relevant questions are whether instruments satisfying the condition \eqref{eq:ins-UV} actually exist, and what kind of observables $\Co$ we can realize by formula \eqref{eq:G-seq}.
In the following we will answer these questions in the case of conjugated observables.

\subsection{Canonically conjugated observables and their approximate versions}\label{sec:receipt-conjugate}

In Subsec. \ref{sec:setup} the symmetry groups $G$ and $H$ had no connection.
We will now add more structure and assume that the situation is more specific.
This then leads to the usual notion of canonically conjugated observables.

Suppose $G$ is an abelian lcsc group and $\hG$ is its dual group. 
We denote additively the product in $G$ and multiplicatively the product in $\hG$. 
The pairing between $x\in G$ and $\chi\in\hG$ is the complex number $\chi(x)$.

\begin{definition}
A {\em Weyl system} for the pair $(G,\hG)$ is a couple of unitary representations $(U,V)$ of $G$ and $\hG$, respectively, defined on the same Hilbert space $\hh$ and satisfying
\begin{equation}\label{WHcommutaz}
U_x V_\chi = \overline{\chi(x)} V_\chi U_x
\end{equation}
for all $x\in G$, $\chi\in\hG$.
\end{definition}

We recall that, if $\Ao$ and $\Bo$ are sharp observables based on $G$ and $\hG$, respectively, and both with values in $\lh$, then by SNAG theorem formulas
\begin{equation}\label{eq:SNAG}
U_x = \int \overline{\chi(x)} \de \Bo (\chi) \, , \qquad V_\chi = \int \chi(x) \de \Ao (x)
\end{equation}
define two unitary representations $U$ of $G$ and $V$ of $\hG$ in the Hilbert space $\hh$, and, conversely, any couple of unitary representations $(U,V)$ of $G$ and $\hG$ in $\hh$ arise in this way from a unique couple of sharp observables $(\Ao,\Bo)$ on $G$ and $\hG$, respectively (see e.~g.~\cite[Theorem 4.44]{CAHA95}).

\begin{definition}
A pair of sharp obsevables $(\Ao,\Bo)$, with $\Ao : \bor{G} \to \lh$ and $\Bo : \bor{\hG} \to \lh$, are {\em canonically conjugated} if the couple of representations $(U,V)$ defined in eqs.~\eqref{eq:SNAG} is a Weyl system.

If canonically conjugated observables $(\Ao,\Bo)$ and a Weyl system $(U,V)$ are connected in this way, we call them {\em associated}.
\end{definition}

Two Weyl systems $(U,V)$ on $\hh$ and $(U^\prime,V^\prime)$ on $\hh^\prime$ [resp., two canonically conjugated observables $(\Ao,\Bo)$ on $\hh$ and $(\Ao^\prime,\Bo^\prime)$ on $\hh^\prime$] are {\em equivalent} if there exists a unitary operator $W:\hh\to\hh^\prime$ intertwining the pair of representations $(U,V)$ and $(U^\prime,V^\prime)$ [resp., the pair of sharp observables $(\Ao,\Bo)$ and $(\Ao^\prime,\Bo^\prime)$]. By eqs.~\eqref{eq:SNAG}, SNAG theorem estabilishes an identification of the class of Weyl systems and the class of canonically conjugated observables, and such identification preserves equivalence. Stone-von Neumann theorem then asserts that there exists {\em exactely one} equivalence class of Weyl systems, or, alternatively, canonically conjugated pairs \cite{MackeyStone}.

As a consequence of eq.~\eqref{WHcommutaz}, if $(\Ao,\Bo)$ is a pair of canonically conjugated observables and $(U,V)$ is its associated Weyl system, then, for all $x\in G$, $\chi\in\hG$,
\begin{equation}\label{eq:covA}
U_x \Ao (X) U_x^\ast = \Ao (X+x) \, \qquad V_\chi \Ao (X) V_\chi^\ast = \Ao (X)
\end{equation}
for all $X\in\bor{G}$, and
\begin{equation}\label{eq:covB}
 V_\chi \Bo (Y) V_\chi^\ast = \Bo (\chi Y) \, \qquad U_x \Bo (Y) U_x^\ast = \Bo (Y) 
\end{equation}
for all $Y\in\bor{\hG}$.
This shows that a pair of canonically conjugated observables is a special instance of our general setting described in Subsection \ref{sec:setup}.

The setup described in Subsec.~\ref{sec:setup} shows that an instrument with suitable symmetry properties leads to a sequential implementation of some observables $\widetilde{\Ao}$ and $\widetilde{\Bo}$ with the same symmetry properties than $\Ao$ and $\Bo$. To describe this situation, we introduce the following definition.

\begin{definition}
A pair of obsevables $(\widetilde{\Ao},\widetilde{\Bo})$, with $\widetilde{\Ao} : \bor{G} \to \lh$ and $\widetilde{\Bo} : \bor{\hG} \to \lh$, are {\em conjugated} if there exists a Weyl system $(U,V)$ on $\hh$ such that the covariance and invariance relations \eqref{eq:covA} and \eqref{eq:covB} hold with $\Ao$ and $\Bo$ replaced by $\widetilde{\Ao}$ and $\widetilde{\Bo}$, respectively.

If $(\widetilde{\Ao},\widetilde{\Bo})$ and $(U,V)$ are connected in this way, we say that $(\widetilde{\Ao},\widetilde{\Bo})$ are {\em related} to the Weyl system $(U,V)$, or, equivalently, to the canonically conjugated observables $(\Ao,\Bo)$ associated to $(U,V)$.
\end{definition}

Conjugated observables $(\widetilde{\Ao},\widetilde{\Bo})$ can be expressed in a very simple form in terms of a canonical conjugated pair $(\Ao,\Bo)$ related to them. Indeed, there exist probability measures $\sigma$ on $G$ and $\tau$ on $\hG$ such that $\widetilde{\Ao} \equiv \Ao_\sigma$ and $\widetilde{\Bo} \equiv \Bo_\tau$, where
\begin{equation}\label{eq:Afuzzy}
\Ao_\sigma (X) = \int \sigma(X-x) \de \Ao(x) \quad \forall X \in\bor{G}  \, ,
\end{equation}
and
\begin{equation}\label{eq:Bfuzzy}
\Bo_\tau (\hY) = \int \tau (\chi^{-1}\hY) \de \Bo(\chi) \quad \forall \hY \in\bor{\hG}  \, .
\end{equation}
This result has been proved in \cite{CaHeTo04} in the case $G=\R^n$, and the extension of the proof to the general setting (i.~e.~$G$ is a lcsc abelian group) is straightforward. If $\sigma = \delta_0$ [resp., $\tau = \delta_1$] is the Dirac measure centered at $0$ [resp., $1$], then $\Ao_\sigma = \Ao$ [resp., $\Bo_\tau = \Bo$]. The deviation of $\sigma$ and $\tau$ from Dirac $\delta$'s can be interpreted as imprecision or noise in the measurement of $\Ao$ and $\Bo$. 

We sometimes need to fix a concrete representation for conjugated and canonically conjugated observables.
 This representation also demonstrates that an equivalence class of canonically conjugated observables exists for any abelian lcsc group $G$. 
By Stone-von Neumann uniqueness theorem, fixing the representation does not affect the full generality of our discussion and results.
To start with, we fix Haar measures $\lam$ and $\hlam$ in $G$ and $\hG$, respectively, and set $\hh=\ldue{G,\lam} \equiv \ldue{G}$ and $\hhat=\ldue{\hG,\hlam} \equiv \ldue{\hG}$. 
There is then a unique real constant $c>0$ (depending on the choices of $\lam$ and $\hlam$) such that the Fourier transform
\begin{equation*}
\left( \ff f \right) (\chi) = c \int \overline{\chi (x)} f(x) \de\lambda (x) \qquad f\in \luno{G} \cap \ldue{G}
\end{equation*}
extends to a unitary map $\ff$ from $\hh$ to $\hhat$.
We take $U$ and $V$ to be the left regular representations of $G$ and $\hG$, hence acting in $\hh$ as
\begin{eqnarray*}
U_x f (y) & = & f (y-x) \quad \forall x\in G,\, f\in \hh \\
V_\chi f (y) & = & \chi(y) f(y) \quad \forall \chi\in \hG,\, f\in \hh \, .
\end{eqnarray*}
It is immediately checked that $(U,V)$ is a Weyl system. 
Its associated pair of canonically conjugated observables $(\Ao,\Bo)$ are given by
\begin{equation}\label{eq:A}
\Ao(X) f (x) = 1_X(x) f (x) \quad \forall X\in\bor{G}, \, f\in\hi \, , 
\end{equation} 
where $1_X$ is the characteristic function of a set $X$, and
\begin{equation}\label{eq:B}
[ \ff\Bo(Y)\ff^{-1} \hat{f} ] (\chi) = 1_\hY (\chi) \hat{f} (\chi) \quad \forall \hY\in\bor{\hG}, \, \hat{f}\in\hhat \, .
\end{equation}
If $(\Ao_\sigma,\Bo_\tau)$ are conjugated observables related to $(\Ao,\Bo)$ as in eqs.~\eqref{eq:Afuzzy}-\eqref{eq:Bfuzzy}, then
\begin{equation*}
[\Ao_\sigma (X) f] (x) = \sigma (X-x) f(x) \quad \forall f\in\hh ,\, X\in\bor{G}
\end{equation*}
and
\begin{equation*}
[\Bo_\tau (Y) f] (x) = c\, [(\ff^{-1} \tau_Y) \ast f] (x) \quad \forall f\in\hh \textrm{ and } Y\in \bor{\hG} \textrm{ such that } \hlam (Y) < \infty \, ,
\end{equation*}
where $\ast$ is the convolution and $\tau_Y (\chi) = \tau(\chi^{-1} Y)$ $\forall \chi\in\hG$.

\subsection{Weyl-Heisenberg group}\label{sec:weyl}

As explained earlier, we are interested on instruments satisfying the condition \eqref{eq:ins-UV} for a Weyl system $(U,V)$. 
This condition is not a single covariance condition (as there are two groups $G$ and $\hG$ involved), and for this reason it is convenient to introduce the {\em Weyl-Heisenberg group $\WH$ associated to $G$}.
The Weyl-Heisenberg group $\WH$ is the topological product $G\times \hG \times \T$ ($\T =$ the complex numbers with modulus $1$) endowed with the composition law
\begin{equation*}
(x,\chi,u)(y,\gamma,v) = (x+y,\chi\gamma, \overline{\chi(y)} uv) \, .
\end{equation*}
This makes $\WH$ a non-abelian topological group. Its centre is the subgroup $Z=\{ (0,1,u) \mid u\in \T \}$.

The abelian subgroup $N \equiv \{0\}\times \hG\times \T$ is normal and closed in $\WH
$, and the homogeneous space $\WH/N$ can be identified with $G$.
The group $\WH$ acts on the homogenous space $\WH/N$ in the usual way.
With the identification $\WH/N \simeq G$, the action of an element $( x,\chi,u )\in\WH$ on $y\in G$ is simply
$$
( x,\chi,u ) [y] = x+y \ .
$$

Setting
\begin{equation*}
W(x,\chi,u) = \overline{u} U_x V_\chi \, ,
\end{equation*}
we obtain an irreducible unitary representation of $\WH
$ in $\hh$, which is called the {\em Schr\"odinger representation}.
We refer to Chapter 1, \S 3 in \cite{HAPS89} for more details on such representation in the case $G=\hG = \R^n$, and to \cite{MackeyStone} for the general case.
We can now conclude that an instrument $\ii$ on $G$ satisfies eq.~\eqref{eq:ins-UV} if and only if 
\begin{equation*}
\ii_{(x,\chi,u) [X]}(\varrho) = W(x,\chi,u) \ii_X \left[ W(x,\chi,u)^\ast \varrho W(x,\chi,u) \right] W(x,\chi,u)^\ast
\end{equation*}
for all $X\in \bor{G}$, $(x,\chi,u) \in \WH$.
This is a single covariance condition in the sense defined by Davies \cite{Davies70}. 
Therefore, to understand the sequential measurements of conjugated observables, we need to study \emph{$W$-covariant instruments based on $\WH/N$}.

\section{Covariant  instruments}\label{sec:structure}

In this section we characterize the structure of $W$-covariant instruments based on $\WH/N$.
Our characterization of $W$-covariant instruments is best approached by first recalling a special class of measurement models (Subsec. \ref{sec:standard}).
The main structure theorem (Subsec. \ref{sec:theorem}) can then be seen as a natural extension of this model.
Based on these results, we will draw conclusions on the implementation of covariant phase space observables (Subsec. \ref{sec:cpso}).

\subsection{Von Neumann's measurement model}\label{sec:standard}

In his famous book \cite{MFQM55}, von Neumann described a position measurement scheme.
It is known that von Neumann's measurement model leads to a covariant instrument \cite{Ozawa86} (see also \cite{OQP97}, \cite{CaHeTo08}).
For this reason, it seems useful to have a closer look on it.

In our case, we intend to measure the observable $\Ao$ (defined in eq.~\eqref{eq:A}) by suitably coupling the system with an ancillary copy of it and measuring the observable $\Ao$ of the copy.
If we follow the idea of von Neumann's model, the measurement coupling on the composite system is described by the unitary operator $L: \hh\otimes \hh \to \hh\otimes \hh$,
\begin{equation}\label{eq:def:L}
Lf (x,y) = f (x,y-x) \quad \forall f\in \hh\otimes \hh = \ldue{G\times G, \lam \otimes \lam} \, ,
\end{equation}
which can be alternatively written as
\begin{equation*}
L = \int \overline{\chi (x)} \de (\Ao \otimes \Bo) (x,\chi) 
\end{equation*}
(for the standard identification $\hh\otimes \hh = \ldue{G\times G, \lam \otimes \lam}$ see e.~g.~\cite[Theorem 7.16]{CAHA95}).
In the case $G =\hG=\R$, with pairing $\chi_p (x) = e^{ipx}$, $p,x\in\R$, the last formula can be rewritten in the exponential form
$$
L = e^{-iQ\otimes P}
$$
with 
$$
Q=\int_{\R} x \de \Ao (x) \, , \qquad P=\int_{\R} p \de \Bo (\chi_p)
$$
the standard position and momentum selfadjoint operators,
thus showing the connection to von Neumann's measurement model.
For later purposes it is useful to note that the unitary operator $L$ satisfies the intertwining properties
\begin{equation}\label{eq:propL}
L (U_x \otimes U_y) = (U_x \otimes U_{x+y}) L \, , \quad L(V_\chi \otimes V_\gamma) = (V_{\chi \gamma^{-1}} \otimes V_\gamma) L
\end{equation}
for all $x,y\in G$ and $\chi,\gamma\in\hG$.

We choose the pointer observable to be $\Ao$ on the probe system.
Hence, if the initial state of the probe system is $\omega$, then the instrument $\ii^{\omega}$ deriving from our measurement model is
\begin{equation}\label{eq:standard-ins}
\ii^{\omega}_X (\varrho) = \trped{2}{\id\otimes \Ao(X) L \left( \varrho \otimes \omega \right) L^\ast} \, ,
\end{equation}
where ${\rm tr}_2 : \trhh \to \trh$ is the partial trace in $\trhh$ with respect to the second factor (the probe system). Using the intertwining properties \eqref{eq:propL}
 of $L$, the covariance properties \eqref{eq:A-covar} of $\Ao$ and the cyclicity of ${\rm tr}_2$ with respect to the second factor in the tensor product, one can check that $\ii^{\omega}$ is a $W$-covariant instrument on $G$.

With different choices of the probe state $\omega$ we can realize different instruments $\ii^{\omega}$.
There are also two ways to construct new $W$-covariant instruments from those of the form $\ii^{\omega}$.
First, if we fix $x\in G$, then the translated instrument
\begin{equation*}
X \mapsto U^\ast_x \ii^\omega_X(\cdot) U_x
\end{equation*}
is still $W$-covariant.
Another observation is that the set of $W$-covariant instruments is convex, so for collections of states $\omega_1,\ldots, \omega_n$, group elements $x_1,\ldots,x_n$ and positive numbers $t_1,\ldots,t_n$, $\sum_j t_j=1$, we get a $W$-covariant instrument
\begin{equation}\label{eq:convdisc}
X\mapsto \sum_{j=1}^n t_j  U^\ast_{x_j} \ii_X^{\omega_j}(\cdot) U_{x_j} \, .
\end{equation}
In the next subsection we will see that indeed \emph{every} $W$-covariant instrument arises in this way, possibly replacing the above finite sum with a suitably defined integral.

\subsection{Structure of $W$-covariant instruments}\label{sec:theorem}

Before stating the structure theorem for $W$-covariant instruments, we need to fix some additional mathematical notation. 
We recall that a $\trh$-valued {\em vector measure} on $G$ is a countably additive mapping $\Mo : \bor{G} \to \trh$ with finite total variation (we refer to \cite{Lang} for details). 
A $\trh$-valued vector measure $\Mo$ is {\em positive} if $\Mo (X)\geq O$ for all $X\in\bor{G}$, and {\em normalized} if $\tr{\Mo (G)} = 1$. 
We denote by $\mm(G;\trh)$ the linear space of $\trh$-valued vector measure on $G$, and by $\mm(G;\trh)_1$ the convex subset of positive normalized elements in $\mm(G;\trh)$.

A mapping $\Mo : \bor{G} \to \trh$ is a vector measure if and only if there exists a positive measure $\mu$ on $G$ and a map $M\in\luno{G,\mu; \trh}$ ($=$ the space of functions $M : G \to \trh$ which are $\mu$-integrable in the sense of Bochner) such that 
\begin{equation*}
\Mo(X) = \int 1_X (x) M(x) \de \mu (x)
\end{equation*}
for all $X\in\bor{G}$ \cite{Lang}. 
If $\Mo$ and the couple $(\mu,M)$ are related in this way, we write $\de\Mo(x) = M(x) \de \mu(x)$. 
Clearly, the correspondence $\Mo \leftrightarrow (\mu,M)$ is one-to-many.

For each positive measure $\mu$ on $G$, we denote by $\luno{G,\mu; \trh}_1$ the convex subset of elements in $\luno{G,\mu; \trh}$ such that $M(x)\geq O$ for $\mu$-a.a.~$x$ and $\int \tr{M(x)} \de\mu (x) = 1$. If $\Mo\in\mm(G;\trh)$ and $\de\Mo(x) = M(x) \de \mu(x)$, then $\Mo\in\mm(G;\trh)_1$ if and only if $M\in\luno{G,\mu; \trh}_1$.

With this preparation, we are now ready to state our main structure theorem.

\begin{theorem}\label{th:gen}
There is a convex one-to-one correspondence between $\mm(G;\trh)_1$ and the set of $W$-covariant instruments. 
If $\Mo\in\mm(G;\trh)_1$, with $\de\Mo (x) = M(x) \de\mu (x)$, the corresponding instrument $\ii$ is given by
\begin{equation}\label{eq:gen}
\ii_X (\varrho) = \trped{2}{(\id\otimes \Ao(X)) L \left[ \int U_x^\ast \varrho U_x \otimes M(x) \de\mu (x) \right] L^\ast} ,
\end{equation}
where $L: \hh\otimes \hh \to \hh\otimes \hh$ is the unitary operator defined in eq.~\eqref{eq:def:L}
and the integral in eq.~\eqref{eq:gen} is defined in $\trhh$ in the sense of Bochner.
\end{theorem}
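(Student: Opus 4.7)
The plan is to split the theorem into the forward statement---every $\Mo \in \mm(G;\trh)_1$ yields a $W$-covariant instrument via \eqref{eq:gen}---and its converse---every $W$-covariant instrument arises this way from a unique $\Mo$. Convexity of the correspondence will then fall out of the linearity of the Bochner integral in $\Mo$.

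For the forward direction, I would first verify that $\Sigma(\varrho) := \int U_x^* \varrho U_x \otimes M(x) \de\mu(x)$ is well defined in $\trhh$: its integrand is Bochner-measurable with $\no{\cdot}_1$-norm equal to $\no{\varrho}_1 \no{M(x)}_1$ (which is $\mu$-integrable), it is positive whenever $\varrho \geq 0$, and it depends only on $\Mo$ rather than on the chosen density pair $(M,\mu)$. Complete positivity of $\ii_X$ then follows from the structure $\varrho \mapsto \trped{2}{(\id \otimes \Ao(X)) L \Sigma(\varrho) L^*}$, each step of which is completely positive; normalization reduces to $\tr{L \Sigma(\varrho) L^*} = \tr{\varrho} \int \tr{M(x)} \de\mu(x) = \tr{\varrho}$, and $\sigma$-additivity follows from that of $\Ao$ plus dominated convergence. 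For $W$-covariance I would combine the intertwining relations \eqref{eq:propL}---which yield in particular that $L$ commutes with $V_\chi \otimes \id$ and with $\id \otimes U_y$---together with the Weyl relation \eqref{WHcommutaz}, the $U$-covariance of $\Ao$, and the identity $\trped{2}{(\id \otimes U)(\cdot)(\id \otimes U^*)} = \trped{2}{\cdot}$ for any unitary $U$ on $\hh$.

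The inverse direction is the harder half. My strategy is to begin with a $W$-covariant $\ii$ and to reconstruct $\Mo$ from a measurement-model dilation. By Ozawa's theorem \cite{Ozawa84} such a dilation exists; $W$-covariance allows one to choose it $\WH$-equivariant, and the Stone--von Neumann uniqueness theorem then forces the ancillary representation to decompose as a direct integral of copies of the Schr\"odinger representation $W$. The multiplicity space of this decomposition, together with the reduced state on it, naturally assembles into a $\trh$-valued vector measure $\Mo \in \mm(G;\trh)_1$ on $G$, while the coupling between system and ancilla is exactly the unitary $L$ of \eqref{eq:def:L}. Injectivity of the map $\Mo \mapsto \ii$ is obtained by inverting this reconstruction---applying $\ii$ to a suitable test state and undoing the coupling via $L^*$ isolates $\de\Mo$.

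The main obstacle I anticipate is the regularity part of this inverse construction: given $\ii$, showing that the object extracted from the dilation is a bona fide $\trh$-valued vector measure of finite total variation, correctly normalized to lie in $\mm(G;\trh)_1$. This requires a careful disintegration of the ancillary space against the $U$-covariance on $G$ together with a Radon--Nikodym-type argument adapted to the $\trh$-valued setting, which is presumably the content of Section \ref{app:proof:th:gen}.
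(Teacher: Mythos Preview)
Your forward direction is fine and matches the paper's argument almost verbatim; the paper is even briefer, simply rewriting \eqref{eq:gen} in the equivalent form \eqref{eq:genterza} and observing that each $\ii^{\omega(x)}$ is $W$-covariant.

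The inverse direction, however, has two genuine gaps.

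First, Ozawa's theorem~\cite{Ozawa84} only supplies a measurement model, not an equivariant one; the step ``$W$-covariance allows one to choose it $\WH$-equivariant'' is precisely the nontrivial input. The paper does not use Ozawa at all here: it invokes the covariant dilation theorem for instruments from~\cite{CaHeTo09}, which produces a transitive system of imprimitivity $(D,\Ro)$ for $\WH$ based on $G\simeq\WH/N$, acting in an ancillary space $\kk$, together with an isometry $L:\hh\to\hh\otimes\kk$ intertwining $W$ with $W\otimes D$, such that $\ii_X(T)=\trped{2}{(\id\otimes\Ro(X))LTL^\ast}$.

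Second---and this is where your sketch goes wrong conceptually---the ancillary representation $D$ is \emph{not} a direct integral of copies of $W$. A short argument using totality of the set $\{(A\otimes\Ro(X))Lv\}$ shows that $D$ is trivial on the center $Z$, so $D$ factors through the abelian quotient $\WH/Z\simeq G\times\hG$. Stone--von~Neumann therefore says nothing about $D$; the correct tool is Mackey's Imprimitivity Theorem, applied to the pair $(\tilde D,\Ro)$ viewed as a system of imprimitivity for the abelian group $G\times\hG$ based on $G$. This identifies $(\tilde D,\Ro)$ as induced from a representation of the stabilizer $\hG$, and after a (harmless) enlargement one obtains $\kk=\ldue{G\times G,\lam\otimes\mu;\vv}$ with explicit formulas for $\tilde D$ and $\Ro$. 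Irreducibility of $W$ then pins down the intertwiner $L$ up to a unit vector $\phi\in\kk$, and it is from $\phi$ (via Lemma~\ref{lem:2}) that one reads off the density $M\in\luno{G,\mu;\trh}_1$. So the vector measure $\Mo$ is extracted from the intertwiner, not from a ``multiplicity space of copies of $W$''.

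Finally, injectivity in the paper is not obtained by ``undoing the coupling via $L^\ast$'': the dilation is highly non-unique, so one cannot simply invert it. Instead the paper computes the Fourier transform $\ff\Mo^T$ of the instrument-valued measure $X\mapsto\ii_X(T)$, evaluates $\tr{V_\chi U_y^\ast \ff\Mo^T(\gamma^{-1})}$ using the intertwining relations of $L$, and shows that this determines $(\gamma,y,\chi)\mapsto\tr{V_\gamma U_y\,\ff\Mo'(\chi)}$; Lemma~\ref{lem:inj} (a reconstruction formula coming from square integrability of $W$) then recovers $\Mo'$, hence $\Mo$.
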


The proof of Theorem \ref{th:gen} requires some preliminary technical results, therefore we postpone it to Section  \ref{app:proof:th:gen}.

If $M\in\luno{G,\mu;\trh}_1$ and we define $M^\prime (x) = U_x^\ast M(x) U_x$, then still $M^\prime \in\luno{G,\mu;\trh}_1$ and the $\trh$-valued measure $\Mo^\prime$ with $\de\Mo^\prime (x) = M^\prime (x) \de\mu (x)$ still belongs to $\mm(G;\trh)$. Formula~\eqref{eq:gen} becomes
\begin{equation}\label{eq:genbis}
\ii_X (\varrho) = \int U_x^\ast\ \trped{2}{(\id\otimes \Ao(X)) L (\varrho \otimes M^\prime (x)) L^\ast} U_x \de\mu (x)
\end{equation}
by the intertwining properties \eqref{eq:propL} of $L$. 
We can further normalize $M^\prime (x)$ to the state $\omega(x)= M^\prime (x)/\no{M^\prime (x)}_1$ (with the convention $0/0 = 0$), define the probability measure $\de\nu (x) = \no{M^\prime (x)}_1 \de \mu (x)$, and then eq.~\eqref{eq:genbis} can be rewritten as
\begin{equation}\label{eq:genterza}
\ii_X (\varrho) = \int U_x^\ast \lfq \ii^{\omega(x)}_X (\varrho) \rgq U_x \de\nu (x) \, ,
\end{equation}
where each $\ii^{\omega(x)}$ is an instrument of the standard form \eqref{eq:standard-ins}.
Comparing eqs.~\eqref{eq:convdisc} and \eqref{eq:genterza} we see that $\ii$ is a continuous convex combination of translations of von Neumann-type instruments $\ii^{\omega(x)}$.

As discussed in Subsec.~\ref{sec:receipt-conjugate}, the actually measured observable (the one associated to $\ii$) is not $\Ao$ but its unsharp version $\tilde{\Ao}$ defined in eq.~\eqref{eq:marg-A}. Similarly, the observable $\tilde{\Bo}$ defined in eq.~\eqref{eq:marg-B} is an unsharp version of $\Bo$. Since by eqs.~\eqref{eq:A-unsharp-covar} and \eqref{eq:B-unsharp-covar} the pair of observables $(\tilde{\Ao},\tilde{\Bo})$ are conjugated and related to the canonically conjugated observables $(\Ao,\Bo)$, we have $(\tilde{\Ao},\tilde{\Bo}) = (\Ao_\sigma,\Bo_\tau)$ for some probability measures $\sigma$ and $\tau$. 
The relation of $\sigma$ and $\tau$ to Theorem \ref{th:gen} is the following.

\begin{proposition}\label{prop:dens}
If $\ii$ is the instrument defined by eq.~\eqref{eq:genbis}, and $(\tilde{\Ao},\tilde{\Bo})$ is the pair of observables defined in eqs.~\eqref{eq:marg-A} and \eqref{eq:marg-B}, then $(\tilde{\Ao},\tilde{\Bo}) = (\Ao_\sigma,\Bo_\tau)$, where the probability measures $\sigma$ and $\tau$ are given by
\begin{eqnarray}
\sigma(X) & = & \tr{\Ao(X) \Mo^\prime (G)} \label{eq:dens-A} \\
\tau(Y) & = & \tr{\Bo(Y^{-1}) \Mo^\prime (G)} \label{eq:dens-B}
\end{eqnarray}
for all $X\in\bor{G}$, $Y\in\bor{\hG}$.
\end{proposition}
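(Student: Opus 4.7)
The plan is to compute the scalar quantities $\tr{\tilde{\Ao}(X)\varrho}$ and $\tr{\tilde{\Bo}(Y)\varrho}$ for an arbitrary state $\varrho$ directly from eq.~\eqref{eq:genbis}, and to identify the resulting measures $\sigma$ and $\tau$ by comparison with \eqref{eq:Afuzzy} and \eqref{eq:Bfuzzy}. Existence of $\sigma$ and $\tau$ is already asserted in the paragraph preceding the proposition; only their explicit form needs to be established.

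For the first marginal, the $U_x^\ast(\cdot)U_x$ conjugation in eq.~\eqref{eq:genbis} is invisible under the trace, so
\begin{equation*}
\tr{\tilde{\Ao}(X)\varrho} = \int \tr{L^\ast(\id\otimes\Ao(X))L\,(\varrho\otimes M^\prime(x))}\de\mu(x).
\end{equation*}
The key computation is $L^\ast(\id\otimes\Ao(X))L = \int \Ao(\de x^\prime)\otimes\Ao(X-x^\prime)$: concretely, in the multiplication model of Subsec.~\ref{sec:receipt-conjugate} it reduces to observing that $L^\ast(\id\otimes\Ao(X))L$ is multiplication by $(x^\prime,y)\mapsto 1_X(x^\prime+y)$ on $\ldue{G\times G}$; abstractly, it follows from the intertwining identity $L^\ast(\id\otimes V_\gamma)L = V_\gamma\otimes V_\gamma$ (a direct consequence of \eqref{eq:propL}) together with SNAG functional calculus. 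Substituting, using the factorization of traces on joint tensor-product spectral integrals, and applying Fubini to swap the $\mu$- and $\Ao$-integrals, one obtains $\tr{\tilde{\Ao}(X)\varrho}=\int \sigma(X-x^\prime)\tr{\Ao(\de x^\prime)\varrho}$ with $\sigma(X)=\tr{\Ao(X)\Mo^\prime(G)}$, which matches \eqref{eq:Afuzzy}.

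For the second marginal, the SNAG identity $U_x=\int \overline{\chi(x)}\de\Bo(\chi)$ shows that $U_x$ commutes with every $\Bo(Y)$, so again the $U_x$'s drop out of the trace and
\begin{equation*}
\tr{\tilde{\Bo}(Y)\varrho} = \int \tr{L^\ast(\Bo(Y)\otimes\id)L\,(\varrho\otimes M^\prime(x))}\de\mu(x).
\end{equation*}
The main obstacle is computing $L^\ast(\Bo(Y)\otimes\id)L$, as $L$ is defined in the $\Ao$-picture whereas $\Bo$ lives on the Fourier-dual side. The way in is the intertwining identity $L^\ast(U_x\otimes\id)L = U_x\otimes U_x^\ast$, following from \eqref{eq:propL} with $y=-x$. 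Identifying the joint spectral measure on $\hG$ of the unitary representation $x\mapsto U_x\otimes U_x^\ast$ with the pushforward of $\Bo\otimes\Bo$ under $(\chi,\eta)\mapsto\chi\eta^{-1}$, and extending by SNAG functional calculus, gives
\begin{equation*}
L^\ast(\Bo(Y)\otimes\id)L = \int 1_Y(\chi\eta^{-1})\,\de\Bo(\chi)\otimes\de\Bo(\eta).
\end{equation*}
Taking the trace against $\varrho\otimes M^\prime(x)$, integrating over $\mu$, and applying Fubini in the $\eta$-variable collapses the inner integral to $\tr{\Bo(Y^{-1}\chi)\Mo^\prime(G)} = \tau(\chi^{-1}Y)$, producing $\tr{\tilde{\Bo}(Y)\varrho}=\int \tau(\chi^{-1}Y)\tr{\Bo(\de\chi)\varrho}$, which coincides with \eqref{eq:Bfuzzy}.
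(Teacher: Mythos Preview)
Your argument is correct. The route, however, differs from the paper's in a genuine way.

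The paper works entirely in the kernel picture: it fixes a vector $f$, uses the explicit integral-kernel formula for $\ii^\omega_X(T_f)$ derived in eq.~\eqref{eq:iihX(Pf)}, rewrites everything via Hilbert--Schmidt inner products, and evaluates traces through Lemma~\ref{lem:1}. For $\tilde{\Bo}$ it then inserts the convolution form $\Bo(Y)f = c\,(\ff^{-1}1_Y)\ast f$ and manipulates the resulting double integral until the Fourier transform of the scalar measure $\beta_\omega(\cdot)=\tr{\Bo(\cdot)\omega}$ appears.

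You instead stay at the operator level and compute $L^\ast(\id\otimes\Ao(X))L$ and $L^\ast(\Bo(Y)\otimes\id)L$ directly from the intertwining relations \eqref{eq:propL} and SNAG functional calculus, then read off the smearing kernels by factorizing the trace over the tensor product. This bypasses the kernel machinery of Section~\ref{app:proof:th:gen} altogether and is model-independent: nothing in your computation relies on the concrete $\ldue{G}$ realization beyond the existence of the intertwining identities. It is shorter and arguably more transparent. The paper's approach, on the other hand, reuses infrastructure (Lemma~\ref{lem:1}, eq.~\eqref{eq:iihX(Pf)}) already developed for the proof of Theorem~\ref{th:gen}, so within the paper's economy it costs nothing extra. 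The only place your writeup is slightly terse is the Fubini step swapping the $\mu$-integral with the spectral integrals; since $M'\in\luno{G,\mu;\trh}$ and the integrands are bounded by $\no{M'(x)}_1$, this is routine.
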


The proof of Proposition \ref{prop:dens} is given in Section \ref{app:proof:th:gen}.

\subsection{Covariant phase space observables}\label{sec:cpso}

The centre of the Weyl-Heisenberg group $\WH$ is $Z = \{0\}\times \{1\}\times \T$, and the homogeneous space $\WH/Z$ can be identified with $G\times \hG$.
The action of an element $(x,\chi,u)\in \WH$ on $(y,\gamma)\in G\times\hG$ is then
$$
(x,\chi,u)[(y,\gamma)] = (x+y , \chi\gamma) \, .
$$
The identification $\WH/Z\simeq G\times \hG$ is used in the following formulation of covariant phase space observables.

Suppose that $\ii$ is a $W$-covariant instrument based on $G$.
As we have seen earlier, the observable $\Co$ defined by
\begin{equation}\label{eq:seq-ph-sp-obs}
\Co (X\times Y) = \ii^\ast_X (\Bo (Y)) \quad \forall X\in\bor{G},\, Y\in\bor{\hG} ,
\end{equation}
satisfies
\begin{equation}
U_x V_{\chi} \Co (X\times Y) V_{\chi}^\ast U_x^\ast = \Co ((x+X) \times \chi Y)
\end{equation}
for all $X\in \bor{G}$, $Y\in \bor{\hG}$, $x\in G$, $\chi\in \hG$.
We can rewrite this condition as
\begin{equation}
W(x,\chi,u) \Co (X\times Y) W(x,\chi,u)^\ast = \Co ((x,\chi,u) [X \times Y]) \, .
\end{equation}
This equation extends to the whole $\sigma$-algebra $\bor{G\times\hG}$ generated by product sets, hence
\begin{equation}\label{eq:cov-ph-sp-obs}
\Co((x,\chi,u) [Z]) = W(x,\chi,u) \Co(Z) W(x,\chi,u)^\ast
\end{equation}
for all $Z\in \bor{G\times\hG}$ and $(x,\chi,u) \in \WH$.
We will call an observable $\Co$ satisfying eq.~\eqref{eq:cov-ph-sp-obs} a \emph{covariant phase space observable}.

The set of covariant phase space observables is in one-to-one correspondence with the set $\sh$ \cite{cpso}.
If $S\in\sh$, then the corresponding covariant phase space observable $\Co_S$ is given by
\begin{equation*}
\Co_S(Z) = c^2 \int_Z U_x V_\chi S V_\chi^\ast U_x^\ast \de (\lambda \otimes \hlam)(x,\chi) \quad \forall Z\in\bor{G\times \hG} \, .
\end{equation*}
In particular, if $S$ is a one-dimensional projection $S=\kb{\eta}{\eta}$, then the observable $\Co_S$ is generated by a family of  generalized coherent states $\{ U_x V_\chi\eta \mid (x,\chi)\in\WH/Z \}$.

\begin{proposition}\label{prop:di-S}
Let $\ii$ be a $W$-covariant instrument defined in eq.~\eqref{eq:genbis} and $\Co$ the covariant phase space observable defined in eq.~\eqref{eq:seq-ph-sp-obs}. 
Then $\Co = \Co_S$, with $S\in \sh$ determined by condition
\begin{equation}\label{eq:di-S}
\ip{f_1}{S f_2} = \ip{\check{f}_2}{\Mo{^\prime (G)} \check{f}_1} \quad \forall f_1,f_2 \in\hh \, ,
\end{equation}
where $f\mapsto \check{f}$ is the antiunitary mapping on $\hh=\ldue{G}$ given by
$$
\check{f} (x) = \overline{f(-x)} .
$$
\end{proposition}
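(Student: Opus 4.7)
The plan is to exploit the one-to-one correspondence between states $S \in \sh$ and covariant phase space observables $\Co_S$, and to \emph{identify} the specific $S$ that corresponds to $\Co$ by direct matching of matrix elements. The instrument $\ii$ is $W$-covariant, so equation \eqref{eq:seq-ph-sp-obs} already guarantees that $\Co$ obeys the phase space covariance \eqref{eq:cov-ph-sp-obs}; hence $\Co = \Co_{S}$ for a unique $S$, and the only task is to pin down $S$ via condition \eqref{eq:di-S}.

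First I would compute $\tr{\varrho\,\Co(X\times Y)} = \tr{\ii_X(\varrho)\Bo(Y)}$ from formula \eqref{eq:genbis}. Invariance $U_x \Bo(Y) U_x^\ast = \Bo(Y)$ from \eqref{eq:covB} eliminates the $U_x$-conjugations inside the $\mu$-integral, and exchanging trace and Bochner integral yields the compact expression
\begin{equation*}
\tr{\varrho\,\Co(X\times Y)} = \tr{L(\varrho \otimes \Mo'(G)) L^\ast (\Bo(Y)\otimes\Ao(X))}.
\end{equation*}
So everything reduces to showing that for all $\varrho$, $X$, $Y$ this quantity equals $c^2\int_{X\times Y}\tr{\varrho\, U_x V_\chi S V_\chi^\ast U_x^\ast}\,d(\lambda\otimes\hlam)(x,\chi)$ with $S$ as in \eqref{eq:di-S}.

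Next, working in $\hh = \ldue{G}$, I would take $\varrho = \ketbra{f_1}{f_2}$ and a formal decomposition $\Mo'(G) \sim \ketbra{\phi}{\psi}$ (later removed by sesquilinear linearity and density), and evaluate both sides via integral kernels. Using $(Lh)(x,y) = h(x,y-x)$ and the Fourier-multiplier form of $\Bo(Y)$, i.e.~$K_{\Bo(Y)}(x,x') = c^2 \int_Y \chi(x-x') \de \hlam (\chi)$, a standard sequence of variable changes brings the LHS to a phase space integral whose integrand matches the kernel of $U_x V_\chi S V_\chi^\ast U_x^\ast$ provided the kernel $K_S$ of $S$ satisfies $K_S(x,y) = K_{\Mo'(G)}(-y,-x)$. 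A short direct computation shows that this kernel identity is equivalent to the coordinate-free sesquilinear relation \eqref{eq:di-S}, with the sign flips $x \mapsto -x, y \mapsto -y$ and the swap of arguments incarnating the map $f\mapsto\check f$.

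The main obstacle is that $K_{\Bo(Y)}$ in general exists only as a distribution, so the Fubini-type manipulations are not immediately justified. I would handle this by first restricting to $Y\in\bor{\hG}$ with $\hlam(Y)<\infty$ and to $f_1,f_2$ in a dense, well-behaved subspace (e.g.~$\ff^{-1}C_c(\hG)$), where all integrals are absolutely convergent, establish the identity there, and then extend to arbitrary $Y\in\bor{\hG}$ by the $\sigma$-additivity of both $\Co$ and $\Co_S$ in the second slot, and to arbitrary $\varrho\in\sh$ by continuity in the trace norm. A cleaner alternative, which I would mention as a fallback, is to Fourier-transform the first factor of $\hh\otimes\hh$: under $\ff\otimes \id$ the operator $\Bo(Y)$ becomes multiplication by $1_Y$ while $L$ becomes a simple phase multiplication, and the whole identity collapses to a routine Fubini computation with indicator functions, from which the kernel identification of $S$ is read off directly.
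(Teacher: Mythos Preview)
Your proposal is correct and follows the same overall line as the paper: reduce $\tr{\varrho\,\Co(X\times Y)}$ to the single expression $\tr{L(\varrho\otimes\Mo'(G))L^\ast(\Bo(Y)\otimes\Ao(X))}$ by using $U$-invariance of $\Bo$, and then identify the state $S$ by an explicit computation on rank-one inputs. The one noteworthy difference is the order in which you present the two computational routes. Your primary plan works with the integral (distributional) kernel of $\Bo(Y)$ and then patches the analytical difficulties via a density/regularization argument; the paper instead adopts from the outset what you call the ``cleaner alternative'': it applies $\ff\otimes\id$ to the first tensor factor, turning $\Bo(Y)$ into multiplication by $1_Y$, and observes directly that $[(\ff\otimes\id)L(f'\otimes f)](\chi,x)=c\,\ip{f'}{V_\chi U_x\check f}$, so the coherent-state matrix element drops out immediately with no distributional issues. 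This buys the paper a shorter and technically unproblematic computation, whereas your primary route, while valid, requires the extra care you describe. Since you already identify the Fourier-transform method as a fallback, I would suggest simply promoting it to the main argument.
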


Again, we postpone the proof of Proposition \ref{prop:di-S} to Section \ref{app:proof:th:gen}.

If $T\in\trh$, then defining $\check{T}$ as
\begin{equation}\label{eq:omegacheck}
\ip{f_1}{\check{T} f_2} = \ip{\check{f}_2}{T \check{f}_1} \quad \forall f_1,f_2 \in\hh \, ,
\end{equation}
we obtain a linear isometric isomorphism of $\trh$ with itself. 
This follows easily since, if $T = \sum_i \lam_i \kb{v_i}{u_i}$ is the singular values decomposition of $T$, then $\check{T} = \sum_i \lam_i \kb{\check{u}_i}{\check{v}_i}$ is the singular values decomposition of $\check{T}$.
Therefore, Proposition \ref{prop:di-S} leads to the following conclusion.

\begin{corollary}\label{cor:every-cpso-is-seq}
Every covariant phase space observable has a sequential implementation of the form \eqref{eq:seq-ph-sp-obs}.
\end{corollary}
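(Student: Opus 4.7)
The plan is to read Corollary \ref{cor:every-cpso-is-seq} as a surjectivity statement and reduce it directly to Proposition \ref{prop:di-S}. Given an arbitrary covariant phase space observable $\Co_S$ with $S\in\sh$, I want to exhibit a $W$-covariant instrument $\ii$ such that the observable $\Co$ defined by eq.~\eqref{eq:seq-ph-sp-obs} coincides with $\Co_S$.

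First I would verify that the involution $T\mapsto\check{T}$ from \eqref{eq:omegacheck} maps $\sh$ into itself. Taking $T\geq O$ and $f_1 = f_2 = f$ in \eqref{eq:omegacheck} gives $\ip{f}{\check{T}f} = \ip{\check{f}}{T\check{f}}\geq 0$, so $\check{T}\geq O$. Computing the trace against any orthonormal basis $\{e_n\}$ of $\hh$ yields $\tr{\check{T}} = \sum_n \ip{e_n}{\check{T}e_n} = \sum_n \ip{\check{e}_n}{T\check{e}_n} = \tr{T}$, since the antiunitarity of $f\mapsto\check{f}$ ensures that $\{\check{e}_n\}$ is again an orthonormal basis. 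Hence $\check{S}\in\sh$ whenever $S\in\sh$. A parallel computation gives the involutive identity $\check{\check{T}} = T$ for all $T\in\trh$.

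Next I would construct $\Mo\in\mm(G;\trh)_1$ such that $\Mo^\prime (G) = \check{S}$. The simplest choice is the point mass at the identity of $G$: take $\mu = \delta_0$ and $M(0) = \check{S}$, so that $\de\Mo(x) = M(x)\de\mu(x)$. Since $\check{S}$ is a state and $U_0 = \id$, the normalization $\tr{\Mo(G)}=1$ is immediate and $\Mo^\prime (G) = U_0^\ast M(0) U_0 = \check{S}$. Let $\ii$ be the $W$-covariant instrument associated with this $\Mo$ via Theorem \ref{th:gen}. Proposition \ref{prop:di-S} then identifies the corresponding covariant phase space observable $\Co$ with $\Co_{S^\prime}$, where $S^\prime\in\sh$ is determined by $\ip{f_1}{S^\prime f_2} = \ip{\check{f}_2}{\check{S}\,\check{f}_1}$ for all $f_1,f_2\in\hh$. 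Comparing this with eq.~\eqref{eq:omegacheck} gives $S^\prime = \check{\check{S}} = S$, and the corollary follows.

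There is essentially no obstacle beyond this bookkeeping, because all of the real content is already packaged into Proposition \ref{prop:di-S} combined with the fact that $T\mapsto\check{T}$ is a bijection of $\trh$ that restricts to a bijection of $\sh$; the point-mass choice above exhibits a canonical preimage in $\mm(G;\trh)_1$ of the prescribed state $\check{S}$ under the map $\Mo\mapsto\Mo^\prime(G)$. If one wishes to exhibit a more physically transparent realization, one may equivalently use the standard von Neumann-type instrument $\ii^{\check{S}}$ of eq.~\eqref{eq:standard-ins} with probe state $\omega=\check{S}$, which is exactly what eq.~\eqref{eq:genterza} produces from the above $\Mo$.
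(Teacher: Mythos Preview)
Your proposal is correct and follows essentially the same route as the paper: the corollary is deduced from Proposition~\ref{prop:di-S} together with the fact that $T\mapsto\check{T}$ is a bijection of $\sh$, so that the standard instrument $\ii^{\check{S}}$ (equivalently, the point-mass $\Mo$ you construct) realizes $\Co_S$. The only cosmetic difference is that the paper invokes the singular value decomposition to see that $T\mapsto\check{T}$ is an isometric isomorphism of $\trh$, whereas you verify directly that it preserves positivity and trace and is involutive; both arguments yield the needed surjectivity onto $\sh$.
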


Let us notice that in formula \eqref{eq:di-S} the $\trh$-valued measure $\Mo'$ occurs only through its total value $\Mo{^\prime (G)}$.
This means, in particular, that each covariant phase space observable has a sequential implementation where the covariant instrument is of the standard form \eqref{eq:standard-ins}.
Moreover, the correspondence $\omega \leftrightarrow S$ between the probe states $\omega$ and the generating operators $S$ is then one-to-one and given by $\omega = \check{S}$.
We can thus state our result in the following form.

\begin{corollary}\label{cor:every-cpso-can-be-programmed}
All covariant phase space observables can be implemented with the same measurement coupling and pointer observable just by changing the probe state. 
\end{corollary}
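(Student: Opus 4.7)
The plan is to extract this corollary from Proposition~\ref{prop:di-S} by showing that the assignment $\omega \mapsto \ii^\omega$ of standard von Neumann-type instruments~\eqref{eq:standard-ins} already exhausts, via~\eqref{eq:seq-ph-sp-obs}, all covariant phase space observables, with $\omega$ being the unique preimage of the generating state $S$ under the involution $\check{\cdot}$. The key observation, stated in the paragraph just preceding the corollary, is that formula~\eqref{eq:di-S} depends on $\Mo^\prime$ only through its total mass $\Mo^\prime(G) \in \sh$. Hence, once we have any state $\omega$ on $\hh$, the $\trh$-valued measure concentrated at $0\in G$ with mass $\omega$ gives $\Mo^\prime(G)=\omega$, and the resulting instrument in~\eqref{eq:genbis} collapses to the standard form~\eqref{eq:standard-ins}, whose measurement coupling $L$ and pointer observable $\Ao$ do not depend on $\omega$.

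Next, I would verify the two structural properties of the map $T\mapsto \check T$ defined in~\eqref{eq:omegacheck}. The text already notes that the singular values decomposition $T=\sum_i \lam_i \kb{v_i}{u_i}$ transforms into $\check T=\sum_i \lam_i \kb{\check u_i}{\check v_i}$; specializing this to a positive $T$ (so that $v_i=u_i$ up to phases and $\lam_i\geq 0$) shows that $\check T$ is positive and $\tr{\check T}=\tr{T}$, so $\check{\cdot}$ maps $\sh$ into $\sh$. The involution property $\check{\check T}=T$ reduces, via~\eqref{eq:omegacheck}, to the pointwise identity $\check{\check f}(x)=\overline{\overline{f(-(-x))}}=f(x)$ on $\hh=\ldue{G}$, which is immediate.

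Given any covariant phase space observable $\Co_S$ with $S\in\sh$, I then set $\omega := \check S\in\sh$ and consider the standard instrument $\ii^\omega$ of~\eqref{eq:standard-ins}. This corresponds to taking $\mu=\delta_0$ and $M^\prime\equiv\omega$ in~\eqref{eq:genbis}, so that $\Mo^\prime(G)=\omega$. Proposition~\ref{prop:di-S} applied to this $\Mo^\prime$ identifies the covariant phase space observable produced by~\eqref{eq:seq-ph-sp-obs} as $\Co_{S'}$, where $S'$ is determined by
\begin{equation*}
\ip{f_1}{S' f_2} = \ip{\check f_2}{\omega \check f_1} = \ip{\check f_2}{\check S\, \check f_1} = \ip{f_1}{\check{\check S}\, f_2} = \ip{f_1}{S f_2}
\end{equation*}
for all $f_1,f_2\in\hh$, using~\eqref{eq:omegacheck} and the involution property. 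Hence $S'=S$ and the sequential scheme~\eqref{eq:seq-ph-sp-obs} with instrument $\ii^\omega$ realizes exactly $\Co_S$. Since $L$ and $\Ao$ are the same for every choice of $\omega$, the corollary follows.

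There is no real obstacle here beyond careful bookkeeping: the corollary is essentially a repackaging of Proposition~\ref{prop:di-S} combined with the fact that $\check{\cdot}$ is a bijection of $\sh$ onto itself. The only point requiring a brief verification is the involution identity $\check{\check{\cdot}}=\mathrm{id}$, which guarantees that the correspondence $\omega\leftrightarrow S$ is genuinely one-to-one and onto, so that \emph{every} covariant phase space observable (not merely some) is reached by varying the probe state alone.
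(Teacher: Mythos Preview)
Your proposal is correct and follows essentially the same route as the paper: the corollary is deduced from Proposition~\ref{prop:di-S} by observing that~\eqref{eq:di-S} depends only on $\Mo'(G)$, specializing to the standard instrument $\ii^\omega$ so that $\Mo'(G)=\omega$, and invoking the bijection $\omega=\check{S}$ on $\sh$. Your explicit verification of the involution identity $\check{\check{\cdot}}=\mathrm{id}$ and of the fact that $\check{\cdot}$ preserves $\sh$ spells out in detail what the paper only states in the paragraph preceding the corollary.
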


We recall that a \emph{(measurement-assisted) programmable quantum processor} is a measurement process where the initial probe state can be changed \cite{ZiBu05}.
The initial probe state is thought as a program that encodes different observables.
We can translate Corollary \ref{cor:every-cpso-can-be-programmed} into the statement that \emph{all covariant phase space observables can be implemented on a single programmable quantum processor}.

\section{Examples}\label{sec:examples}

In Subsections \ref{sec:position} and \ref{sec:spin} below we illustrate the results in the concrete cases of position-momentum and orthogonal spin components.
In particular, we show connections to some earlier studies.

\subsection{Sequential measurement of position and momentum}\label{sec:position}

Let us consider a particle moving in the real line $\R$. Its associated Hilbert space is $\hh = \ldue{\R}$, where Haar measure $\lam$ on $\R$ is just Lebesgue measure.
Characteristic symmetry transformations for the particle include space translations and velocity boosts. 
These are described by two unitary representations $U$ (translations) and $V$ (boosts) of $\R$, acting on a vector $\psi\in\hi$ as
\begin{equation}
\left[U_q\psi\right](x) = \psi(x-q) \, , \qquad \left[V_p\psi\right](x) = e^{ipx}\psi(x) \, .
\end{equation}
Since $\hat{\R} = \R$, the couple $(U,V)$ is clearly a Weyl system for the pair $(\R,\R)$. Its associated canonically conjugated observables $(\Ao , \Bo)$ are just sharp position and momentum observables, respectively.

A translation covariant instrument for an unsharp position observable was introduced by Davies in \cite{QTOS76}. 
Here we show its connection to our general structure theorem. 
We do this by writing eq.~\eqref{eq:standard-ins}, the definition of $\ii^{\omega}$, in an alternative form. 
In the following we assume that the probe state $\omega$ is pure, hence $\omega = \kb{\varphi}{\varphi}$ for some fixed unit vector $\varphi\in\ldue{\R}$.

Let $\phi,\psi\in\hi$ be unit vectors, and $\varrho = \kb{\phi}{\phi}$. We have
\begin{eqnarray*}
\ip{\psi}{\ii_X^{\omega} (\varrho) \psi} & = & \tr{ \kb{\psi}{\psi} \ \trped{2}{\id\otimes \Ao(X) L ( \varrho \otimes \omega) L^\ast}} \\
& = & \tr{(\kb{\psi}{\psi} \otimes \Ao(X)) L ( \kb{\phi \otimes \varphi}{\phi \otimes \varphi} ) L^\ast} \\
& = & \ip{L(\phi \otimes \varphi)}{[\kb{\psi}{\psi} \otimes \Ao(X)] L(\phi \otimes \varphi)} \\
& = & \iint \overline{\phi(x) \varphi (y-x)} \psi(x) \left( \int \overline{\psi(z)} 1_X (y) \phi(z) \varphi(y-z) \de z \right) \de x \de y \\
& = & \int \overline{\psi(z)} \left[ \phi(z) \int \left( \int 1_X (y) \varphi(y-z) \overline{\varphi (y-x)} \de y \right) \overline{\phi(x)} \psi(x) \de x \right] \de z
\end{eqnarray*}
It follows that $\ii_X^{\omega}\left( \varrho \right)$ is the integral operator with kernel
\begin{equation*}
\Gamma (x,y) = \phi(x) \overline{\phi(y)} \int 1_X (u) \varphi(u-x) \overline{\varphi (u-y)} \de u \, ,
\end{equation*}
i.~e.
$$
[\ii_X^{\omega} (\varrho) \psi](x) = \int \Gamma (x,y) \psi(y) \de y
$$
If $\varphi$ is not only square integrable but also essentialy bounded, then the operator
\begin{equation*}
K_u :\ldue{\R} \rightarrow \ldue{\R} \, , \quad \left[ K_u \psi\right] (y) = \varphi (u-y) \psi(y) 
\end{equation*}
is well defined and the instrument $\ii^{\omega}$ takes the form
\begin{eqnarray*}
\ii_X^\omega(\varrho) & = & \int_X K_u \varrho K_u^\ast \de u \qquad \forall \varrho \in \sh \, .
\end{eqnarray*}
This way of writing is used e.g. in \cite{OQP97,QTOS76,Ozawa86} and we have thus seen that it arises from the general description under certain conditions.

\subsection{Sequential measurement of two orthogonal spin components}\label{sec:spin}

Let us consider a sequential measurement of two orthogonal spin components of a spin-$\half$ system.
The associated Hilbert space is $\hh = \C^2$.
A measurement outcome in each spin component measurement is either $+1$ (up) or $-1$ (down), hence we set $G=\Z_2=\{+ 1, -1\}$.
The two sharp observables that we intend to measure are
\begin{equation}
\So^\va(\pm 1) = \half \left( \id \pm \va\cdot\vsigma \right) \, , \quad \So^\vb(\pm 1) = \half \left( \id \pm \vb\cdot\vsigma \right) \, ,
\end{equation} 
where $\va$ and $\vb$ are orthogonal unit vectors in $\R^3$, and $\vsigma = (\sigma_1 ,\sigma_2 ,\sigma_3)$ is the vector consisting of Pauli operators.

The covariance properties of these observables are formulated in terms of two representations $U$ and $V$ of $\Z_2=\{+ 1 , -1 \}$, which correspond to the $180^\circ$-rotations with axis in the directions of $\vb$ and $\va$, respectively. 
Therefore, these representations are given by
\begin{equation*}
U_+ =\id \, , \,  U_- =\vb\cdot\vsigma \, , \quad V_+ =\id \, , \, V_-=\va\cdot\vsigma \, , 
\end{equation*} 
and the covariance properties are
\begin{equation*}
U_-\So^\va(j)U^\ast_- = \So^\va(-j) \, , \qquad V_-\So^\va(j)V^\ast_- = \So^\va(j)
\end{equation*} 
and
\begin{equation*}
U_-\So^\vb(j)U^\ast_- = \So^\vb(j) \, , \qquad V_-\So^\vb(j)V^\ast_- = \So^\vb(-j) .
\end{equation*}
It is straightforward to check that $(U,V)$ is the Weyl system for the pair $(\Z_2 , \Z_2)$ and its associated canonically conjugated observables are just $(\So^\va , \So^\vb)$.

Let $\{ e^\va_+ , e^\va_- \}$ be an orthonormal basis of $\C^2$ diagonalizing the operator $\va \cdot \vsigma$, with eigenvalues $1,-1$ respectively. 
We choose the phases of $e^\va_{\pm}$ such that $\vb\cdot\vsigma e^\va_{\pm}=e^\va_{\mp}$.
We can then write $\So^\va (k) = \kb{e^\va_k}{e^\va_k}$ and $\So^\vb (h) = \ff \So^\va (h) \ff = \frac{1}{2} \kb{e^\va_+ + h e^\va_-}{e^\va_+ + h e^\va_-}$, where
the Fourier transform $\ff:\C^2\to\C^2$ reads
\begin{equation*}
\ff (\alpha e^\va_+ + \beta e^\va_-) = \frac{\alpha + \beta}{\sqrt{2}} e^\va_+ + \frac{\alpha - \beta}{\sqrt{2}} e^\va_- \, .
\end{equation*}

By Theorem \ref{th:gen}, any $W$-covariant instrument $\ii$ on $\Z_2$ is given by
\begin{equation*}
\ii_k (T) = \sum_{x\in \Z_2} \trped{2}{(I\otimes \So^\va (k)) L (U_x^\ast T U_x \otimes M_x) L^\ast}
\end{equation*}
for some positive operators $M_+$, $M_-$ satisfying $\tr{M_+} + \tr{M_-} = 1$.
The unitary operator $L : \C^2 \otimes \C^2 \to \C^2 \otimes \C^2$ has the form
$$
L(e^\va_i \otimes e^\va_k) = e^\va_i \otimes e^\va_{ik} \quad \forall i,k\in\Z_2 \, ,
$$
and we notice that $L^2=\id$ (hence $L^\ast=L$).

For our purposes, it is enough to analyze the standard form instruments $\ii^{\omega}$ (i.~e.~$M_-=0$, $M_+\equiv \omega$) since all the covariant instruments are convex combinations of these and their translates.
We get
\begin{eqnarray*}
\ip{e^\va_i}{\ii^\omega_k (\varrho)e^\va_j} & = & \ip{e^\va_i}{\trped{2}{(\id\otimes \So^\va (k)) L (\varrho \otimes \omega ) L} e^\va_j} \\
& = & \tr{\kb{e^\va_j}{e^\va_i}\ \trped{2}{(\id\otimes \So^\va (k)) L (\varrho \otimes \omega) L}} \\
& = & \tr{\kb{e^\va_j \otimes e^\va_k}{e^\va_i \otimes e^\va_k}\ L ( \varrho \otimes \omega) L} \\
& = & \tr{\kb{e^\va_j \otimes e^\va_{jk}}{e^\va_i \otimes e^\va_{ik}}\ \varrho \otimes \omega } \\
& = & \ip{e^\va_{i}}{\varrho  e^\va_{j}} \ip{e^\va_{i}}{U_k \omega U_k e^\va_{j}} \, .
\end{eqnarray*}
In other words, the matrix $\ip{e^\va_i}{\ii^{\omega}_k (\varrho)e^\va_j}$ is just the Kronecker product of the matrices $\ip{e^\va_{i}}{\varrho e^\va_{j}}$ and $\ip{e^\va_{i}}{U_k \omega U_k e^\va_{j}}$.

By Proposition \ref{prop:dens}, the conjugated observables $(\tilde{\So}^\va, \tilde{\So}^\vb)$ corresponding to $\ii^\omega$ as in eqs.~\eqref{eq:marg-A} and \eqref{eq:marg-B} are characterized by the probability distribution
\begin{eqnarray*}
\sigma(k) & = & \tr{\omega \So^\va(k)} \\
\tau(k) & = & \tr{\omega \So^\vb(k)}
\end{eqnarray*}
for all $k\in\Z_2$.
Denoting $s=2\sigma(1)-1$ and $t=2\tau(1)-1$, we can write the observables $(\tilde{\So}^\va, \tilde{\So}^\vb)$ in the form
\begin{eqnarray*}
\So^{s\va}(\pm 1) & := & \tilde{\So}^\va (\pm 1) = \half \left( \id \pm s\va\cdot\vsigma \right) \\
\So^{t\vb}(\pm 1) & := & \tilde{\So}^\vb (\pm 1) = \half \left( \id \pm t\vb\cdot\vsigma \right) .
\end{eqnarray*}
These are recognized as \emph{unsharp spin observables}, first introduced in \cite{Busch86}.

If we write the state $\omega$ in the form $\omega=\half (\id + \vr\cdot\vsigma )$ for some vector $\vr\in\R^3$, $\no{\vr}\leq 1$, then $s=\vr\cdot\va$ and $t=\vr\cdot\vb$.
Since $\va$ and $\vb$ are orthogonal unit vectors, we obtain
\begin{equation*}
s^2+t^2 = (\vr\cdot\va)^2 + (\vr\cdot\vb)^2 \leq \no{\vr}^2 \leq 1 \, .
\end{equation*}
This trade-off relation between the accuracies of $\So^{s\va}$ and $\So^{t\vb}$ was derived in \cite{Busch86} from the assumption that  $\So^{s\va}$ and $\So^{t\vb}$ are jointly measurable (see also \cite{AnBaAs05} and \cite{BuHe08} for different type of derivations of the same relation).

\section{Proofs}\label{app:proof:th:gen}

We first recall some basic facts from the theory of integral operators and tensor products.

Suppose $\mu_1$ and $\mu_2$ are positive Borel measures on lcsc spaces $\Omega_1$ and $\Omega_2$, respectively. 
A linear operator $A: \ldue{\Omega_1,\mu_1} \frecc \ldue{\Omega_2,\mu_2}$ is an {\em integral operator} if there exists a measurable map $A : \Omega_2 \times \Omega_1 \frecc \C$ (the {\em kernel} associated to $A$) such that
$$
Af (x) = \int A(x;y) f(y) \de\mu_1 (y) \quad \forall f\in\ldue{\Omega_1,\mu_1}
$$
(we use the semicolon in the kernel to separate the variables referring to the $L^2$-spaces of the domain and the image of $A$).

We then have the following fact.
\begin{theorem}
If $\Omega$ is a lcsc space and $\mu$ is a positive Borel measure on $\Omega$, then a linear operator $T : \ldue{\Omega,\mu} \frecc \ldue{\Omega,\mu}$ is in the space $\ti{\ldue{\Omega,\mu}}$ of trace class operators on $\ldue{\Omega,\mu}$ if and only if
\begin{enumerate}
\item[\rm (i)] $T$ is an integral operator;
\item[\rm (ii)] there exists a lcsc space $\Omega^\prime$, a positive Borel measure $\mu^\prime$ on $\Omega^\prime$, and elements $A_1 , A_2 \in \ldue{\Omega \times\Omega^\prime , \mu \otimes \mu^\prime}$ such that
\begin{equation}\label{eq:intT}
T(x;y) = \int A_1 (x;z) \overline{A_2 (y;z)} \de\mu^\prime (z) \quad \textrm{for } \mu\otimes\mu \textrm{ - a.~a.~} (x,y) .
\end{equation}
\end{enumerate}
In this case, the trace of $T$ is
\begin{equation}\label{eq:trT}
\tr{T} = \int A_1 (x;y) \overline{A_2 (x;y)} \de (\mu\otimes\mu^\prime) (x,y) .
\end{equation}
Moreover, $T\in\ti{\ldue{\Omega,\mu}}_+$ if and only if one can choose $A_1 = A_2$ in eq.~\eqref{eq:intT}.
\end{theorem}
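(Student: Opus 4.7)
The plan is to prove the two implications of the biconditional separately, then derive the trace formula \eqref{eq:trT}, and finally handle the positivity criterion.

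For the ``if'' direction, starting from $A_1, A_2 \in \ldue{\Omega\times\Omega^\prime,\mu\otimes\mu^\prime}$, I would associate to each $A_i$ the Hilbert--Schmidt operator $\tilde A_i : \ldue{\Omega^\prime,\mu^\prime} \to \ldue{\Omega,\mu}$ acting by $(\tilde A_i \phi)(x) = \int A_i(x;z)\phi(z)\,\de\mu^\prime(z)$; the standard identification of Hilbert--Schmidt operators with square-integrable kernels ensures that $\tilde A_i$ is bounded with Hilbert--Schmidt norm $\no{A_i}_{\ldue{\Omega\times\Omega^\prime}}$. A direct Fubini computation then shows that the function $T(x;y)$ in \eqref{eq:intT} is precisely the kernel of the composition $\tilde A_1 \tilde A_2^\ast$, so $T = \tilde A_1 \tilde A_2^\ast$ is a product of two Hilbert--Schmidt operators and is therefore trace class; condition (i) follows automatically.

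For the ``only if'' direction, let $T\in\ti{\ldue{\Omega,\mu}}$ and invoke the singular value decomposition $T = \sum_n s_n \ketbra{v_n}{u_n}$, where $\{u_n\}$, $\{v_n\}$ are orthonormal in $\ldue{\Omega,\mu}$ and $\sum_n s_n = \no{T}_1 < \infty$. I would then take $\Omega^\prime = \N$ with counting measure $\mu^\prime$, and set $A_1(x;n) = \sqrt{s_n}\,v_n(x)$, $A_2(y;n) = \sqrt{s_n}\,u_n(y)$. Parseval applied in the $x$-variable yields $\no{A_i}_{\ldue{\Omega\times\N}}^2 = \sum_n s_n < \infty$, and the tensors $v_n \otimes \overline{u_m}$ are orthonormal in $\ldue{\Omega\times\Omega,\mu\otimes\mu}$, so the partial sums $\sum_{n\leq N} s_n v_n(x)\overline{u_n(y)}$ are $\ldue{\Omega\times\Omega}$-Cauchy thanks to $\sum_n s_n^2 < \infty$ (trace class embeds in Hilbert--Schmidt). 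Testing against arbitrary $f,g\in\ldue{\Omega,\mu}$ then verifies that $T$ is the integral operator with this kernel, giving (i), while writing the kernel as $\int A_1(x;z)\overline{A_2(y;z)}\,\de\mu^\prime(z)$ gives (ii).

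The trace formula \eqref{eq:trT} follows by combining $\tr{T} = \sum_n s_n \ip{u_n}{v_n}$ with the definitions of $A_1, A_2$ and applying Fubini, or equivalently from the Hilbert--Schmidt identity $\tr{\tilde A_1 \tilde A_2^\ast} = \int A_1(x;z)\overline{A_2(x;z)}\,\de(\mu\otimes\mu^\prime)(x,z)$ written at the level of kernels. For the positivity claim, if $T\geq O$ then the singular value decomposition coincides with the spectral decomposition with $u_n = v_n$, so the choice $A_1 = A_2$ is admissible; conversely, if $A_1 = A_2$ then $T = \tilde A_1\tilde A_1^\ast \geq O$ is immediate. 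The delicate step, I expect, is the ``only if'' direction: one must promote the formal series $\sum_n s_n v_n(x)\overline{u_n(y)}$ to a bona fide measurable kernel that implements $T$ pointwise, which requires combining the $\ldue{\Omega\times\Omega}$-convergence of the partial sums with a careful choice of a measurable representative and a Fubini check that the resulting kernel actually reproduces $(Tf)(x)$ for all $f\in\ldue{\Omega,\mu}$.
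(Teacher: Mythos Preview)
Your proposal is correct and follows essentially the same strategy as the paper: factor $T$ as a product of two Hilbert--Schmidt operators and use the standard identification of Hilbert--Schmidt operators with $L^2$-kernels. The paper compresses the ``only if'' direction into two citations of Reed--Simon (Theorems VI.22 and VI.23), whereas you unpack the factorisation explicitly via the singular value decomposition, taking $\Omega^\prime=\mathbb{N}$ with counting measure; this is precisely the content behind those cited results, so the two arguments coincide in substance. Your closing remark about the ``delicate step'' is apt but not a gap: the $L^2(\Omega\times\Omega)$-convergence of the partial kernel sums (guaranteed by $\sum_n s_n^2<\infty$) already furnishes a measurable kernel representing the Hilbert--Schmidt operator $T$, and nothing further is needed.
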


\begin{proof}
By Theorem VI.22 in \cite{MMMPI80}, $T$ is trace class if and only if there exist two Hilbert-Schmidt operators $A_1$ and $A_2$, with $A_i : \ldue{\Omega^\prime , \mu^\prime} \frecc \ldue{\Omega , \mu}$, such that $T = A_1 A_2^\ast$. By Theorem VI.23 in \cite{MMMPI80}, each $A_i$ is an integral operator with kernel $A_i \in \ldue{\Omega \times \Omega^\prime , \mu \otimes \mu^\prime}$, hence $T$ is an integral operator with kernel \eqref{eq:intT}. Moreover, $\tr{T} = \tr{A_1 A_2^\ast} = \scal{A_1}{A_2}_{HS}$, where $\scal{\cdot}{\cdot}_{HS}$ is the scalar product in the Hilbert space of Hilbert-Schmidt operators. By Theorem VI.23 in \cite{MMMPI80},
$$
\scal{A_1}{A_2}_{HS} = \int A_1 (x;y) \overline{A_2 (x,y)} \de (\mu \otimes \mu^\prime) (x,y) ,
$$
and eq.~\eqref{eq:trT} then follows.
\end{proof}

We use this characterization of trace class operators in the next four auxiliary lemmas.

\begin{lemma}\label{lem:1}
Let $\mu$ be a positive measure on the lcsc space $\Omega$, and $T$ an integral operator on $\ldue{\Omega,\mu}$. Then $T\in\ti{\ldue{\Omega,\mu}}$ if and only if there exists a Hilbert space $\vv$ and functions $\phi_1,\phi_2 \in\ldue{\Omega, \mu ; \vv}$ such that
\begin{equation}\label{eq:intT2}
T(x;y) = \scal{\phi_1 (x)}{\phi_2 (y)} \quad \textrm{for } \mu\otimes\mu \textrm{ - a.~a.~} (x,y) .
\end{equation}
In this case, the trace of $T$ is
$$
\tr{T} = \scal{\phi_1}{\phi_2}_{L^2} \, .
$$
Moreover, $T$ is positive if and only if one can choose $\phi_1 = \phi_2 = \phi$ in eq.~\eqref{eq:intT2}, and in this case
$$
\no{T}_1 = \no{\phi}^2_{L^2} \, .
$$
\end{lemma}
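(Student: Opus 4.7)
The plan is to reduce the claim to the preceding characterization theorem via the canonical unitary identification
$$L^2(\Omega \times \Omega', \mu\otimes\mu') \;\cong\; L^2(\Omega, \mu;\, L^2(\Omega',\mu'))$$
that sends a scalar kernel $A(x;z)$ to the vector-valued function $x\mapsto A(x;\cdot)$. Under this identification the pointwise formula $\int A_1(x;z)\overline{A_2(y;z)}\de\mu'(z)$ furnished by the Theorem becomes precisely the pairing $\scal{\phi_1(x)}{\phi_2(y)}_\vv$ with $\vv := L^2(\Omega',\mu')$, and by Fubini the double integral appearing in the trace formula becomes the Bochner scalar product $\scal{\phi_1}{\phi_2}_{L^2}$.

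For the \emph{only if} direction, I start from $T\in\ti{\ldue{\Omega,\mu}}$ and pick $\Omega',\mu',A_1,A_2$ as provided by the Theorem. Setting $\vv := L^2(\Omega',\mu')$ and $\phi_i(x) := A_i(x;\cdot)$ directly yields $\phi_i\in\ldue{\Omega,\mu;\vv}$, the factorization $T(x;y) = \scal{\phi_1(x)}{\phi_2(y)}_\vv$ almost everywhere, and the trace formula $\tr{T} = \scal{\phi_1}{\phi_2}_{L^2}$ via Fubini applied to \eqref{eq:trT}.

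For the \emph{if} direction, I start from $\phi_1,\phi_2\in\ldue{\Omega,\mu;\vv}$ satisfying the pointwise identity. By Pettis's measurability theorem the maps $\phi_i$ are essentially valued in a common separable closed subspace $\vv_0\subseteq\vv$; fixing an orthonormal basis $\{e_n\}_{n\in\N}$ of $\vv_0$, I take $(\Omega',\mu')$ to be $\N$ equipped with counting measure and set $A_i(x;n) := \scal{\phi_i(x)}{e_n}_\vv$. Parseval then gives $A_i \in L^2(\Omega\times\N,\mu\otimes\mu')$ together with
$$\sum_n A_1(x;n)\overline{A_2(y;n)} \;=\; \scal{\phi_1(x)}{\phi_2(y)}_\vv \;=\; T(x;y),$$
so the Theorem applies and yields $T\in\ti{\ldue{\Omega,\mu}}$ with the asserted trace. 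The positivity clause is then a direct transcription of the last sentence of the Theorem: $T\geq O$ is equivalent to the possibility of choosing $A_1=A_2$, hence to $\phi_1=\phi_2$, and in that case $\no{T}_1 = \tr{T} = \scal{\phi}{\phi}_{L^2} = \no{\phi}^2_{L^2}$.

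The only genuinely nontrivial ingredient is the essential-separability reduction needed to invoke a countable orthonormal basis in the converse direction; everything else is straightforward bookkeeping within the framework of the Theorem. A minor care point is consistency with the sesquilinearity convention of $\scal{\cdot}{\cdot}$ used in the paper (linear in the first slot, antilinear in the second), but this only fixes the placement of complex conjugates and does not affect any step of the argument.
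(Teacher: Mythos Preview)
Your proof is correct and follows essentially the same route as the paper's: both directions reduce to the preceding Theorem via the identification $\vv = L^2(\Omega',\mu')$ in one direction and an orthonormal-basis expansion with $(\Omega',\mu') = (I,\text{counting})$ in the other. Your invocation of Pettis's theorem to secure separability is an extra precaution that the paper sidesteps via its standing assumption that all Hilbert spaces are separable, but otherwise the arguments coincide.
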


\begin{proof}
If $T$ is trace class, then choose $\mu^\prime$, $\Omega^\prime$, $A_1,A_2 \in \ldue{\Omega\times \Omega^\prime , \mu\otimes \mu^\prime}$ as in eq.~\eqref{eq:intT}, and let $\vv = \ldue{\Omega^\prime , \mu^\prime}$. Then the maps $\phi_i : x\mapsto A_i (x;\cdot)$, $i=1,2$, are in $\ldue{\Omega,\mu ; \vv}$ by Fubini theorem, and eq.~\eqref{eq:intT2} is just a rewriting of eq.~\eqref{eq:intT}.

Conversely, suppose $\vv$ is a Hilbert space and there exists $\phi_1,\phi_2 \in \ldue{\Omega,\mu ; \vv}$ for which eq.~\eqref{eq:intT2} holds. Let $\{ e_n \}_{n\in I}$ be an orthonormal basis in $\vv$ and define $A_i (x,n) = \scal{\phi_1 (x)}{e_n}$, $i=1,2$. Then $A_1, A_2 \in\ldue{\Omega \times \Omega^\prime, \mu\otimes\mu^\prime}$, where $\Omega^\prime = I$ and $\mu^\prime$ is the counting measure of the index set $I$, and with this choice $T$ satisfies eq.~\eqref{eq:intT}. It then follows from the previous discussion that $T\in\ti{\ldue{\Omega,\mu}}$ and
\begin{eqnarray*}
\tr{T} & = & \int \sum_n A_1 (x;n) \overline{A_2 (x;n)} \de \mu (x) = \int \scal{\phi_1 (x)}{\phi_2 (x)} \de \mu (x) \\
& = & \scal{\phi_1}{\phi_2}_{L^2} \, .
\end{eqnarray*}
The rest of the statement is also a straightforward consequence of the above considerations and of the fact that, if $T$ is positive, then $\no{T}_1 = \tr{T}$.
\end{proof}

If $f\in\ldue{\Omega,\mu}$, we introduce the notation $T_f (x;y) = f(x) \overline{f(y)}$. The associated integral operator $T_f = \kb{f}{f}$ on $\ldue{\Omega,\mu}$ is clearly trace class and positive.

\begin{lemma}\label{lem:1.5}
Let $\vv$ be a Hilbert space and $\mu_1 , \mu_2$ positive measures on the lcsc spaces $\Omega_1 , \Omega_2$, respectively. 
Suppose $\phi\in\ldue{\Omega_1 \times \Omega_2, \mu_1\otimes\mu_2 ; \vv}$. 
If $T$ is the integral operator on $\ldue{\Omega_1 \times \Omega_2, \mu_1\otimes\mu_2}$ with kernel
$$
T(x_1 , x_2 ; y_1 , y_2) = \scal{\phi(x_1 , x_2)}{\phi(y_1 , y_2)} \, ,
$$
then $T\in\ti{\ldue{\Omega_1 \times \Omega_2,\mu_1\otimes\mu_2}}_+ = \ti{\ldue{\Omega_1,\mu_1} \otimes \ldue{\Omega_2,\mu_2}}_+$, and its partial trace $\trped{2}{T}$ with respect to $\ldue{\Omega_2,\mu_2}$ is the integral operator on $\ldue{\Omega_1,\mu_1}$ with kernel
\begin{equation}\label{eq:ker-part-tr}
\lfq\trped{2}{T}\rgq (x_1 ; y_1) = \int \scal{\phi(x_1 , z)}{\phi(y_1 , z)} \de\mu_2 (z) \, .
\end{equation}
\end{lemma}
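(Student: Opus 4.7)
The plan is to reduce the claim to two applications of the preceding Lemma \ref{lem:1}: one to establish that $T$ is positive trace class, and one to recognize the conjectured kernel as the kernel of an operator $S$ that we then identify with the partial trace $\trped{2}{T}$.

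First I would note that the kernel $T(x_1,x_2;y_1,y_2) = \scal{\phi(x_1,x_2)}{\phi(y_1,y_2)}$ is exactly of the form treated in Lemma \ref{lem:1}, applied to the product space $\Omega_1\times\Omega_2$ with the single element $\phi \in \ldue{\Omega_1\times\Omega_2, \mu_1\otimes\mu_2; \vv}$. Hence $T$ is positive and trace class on $\ldue{\Omega_1\times\Omega_2, \mu_1\otimes\mu_2}$, and through the canonical identification with $\ldue{\Omega_1,\mu_1}\otimes\ldue{\Omega_2,\mu_2}$ belongs to $\ti{\ldue{\Omega_1,\mu_1}\otimes\ldue{\Omega_2,\mu_2}}_+$, so the partial trace is well defined.

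Next, set $\mathcal{W} = \ldue{\Omega_2,\mu_2;\vv}$ and define $\psi : \Omega_1 \to \mathcal{W}$ by $\psi(x_1)(z) = \phi(x_1,z)$. By Fubini (applied to $\no{\phi(\cdot,\cdot)}_\vv^2$) one checks $\psi \in \ldue{\Omega_1,\mu_1;\mathcal{W}}$. The right-hand side of eq.~\eqref{eq:ker-part-tr} is then precisely $\scal{\psi(x_1)}{\psi(y_1)}_\mathcal{W}$, so a second application of Lemma \ref{lem:1} produces a positive trace class operator $S$ on $\ldue{\Omega_1,\mu_1}$ whose kernel is the claimed expression.

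Finally I would verify $S = \trped{2}{T}$ by testing matrix elements. Fix $f_1,g_1\in\ldue{\Omega_1,\mu_1}$ and any orthonormal basis $\{e_k\}$ of $\ldue{\Omega_2,\mu_2}$; by definition of partial trace,
\begin{equation*}
\ip{f_1}{\trped{2}{T}g_1} = \sum_k \ip{f_1\otimes e_k}{T(g_1\otimes e_k)}.
\end{equation*}
Expanding $T$ via its kernel and exchanging summation and integration (justified by Cauchy--Schwarz and $\phi\in L^2$), the calculation reduces to showing, for a.e.~$x_1,y_1$,
\begin{equation*}
\sum_k \iint \overline{e_k(x_2)}e_k(y_2)\scal{\phi(x_1,x_2)}{\phi(y_1,y_2)}\de\mu_2(x_2)\de\mu_2(y_2) = \int \scal{\phi(x_1,z)}{\phi(y_1,z)}\de\mu_2(z).
\end{equation*}
Expanding $\phi(x_1,\cdot)$ and $\phi(y_1,\cdot)$ in an orthonormal basis of $\vv$ turns each term into a Parseval identity for the basis $\{\overline{e_k}\}$ of $\ldue{\Omega_2,\mu_2}$, which yields exactly the right-hand side above. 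Recombining gives $\ip{f_1}{\trped{2}{T}g_1} = \ip{f_1}{Sg_1}$ for all $f_1,g_1$, hence $\trped{2}{T} = S$. The main technical point is the joint Fubini/Parseval interchange with the vector-valued $\phi$, but this is routine once we work in the auxiliary Hilbert space $\mathcal{W}$.
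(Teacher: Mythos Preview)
Your argument is correct, but it diverges from the paper's proof in the identification step. The paper does not expand over an orthonormal basis of $\ldue{\Omega_2,\mu_2}$; instead it uses the characterization
\[
\scal{\trped{2}{T}\,f}{f} \;=\; \tr{\bigl(T_f\otimes\id_{\ldue{\Omega_2,\mu_2}}\bigr)\,T},\qquad T_f=\kb{f}{f},
\]
and then observes that $(T_f\otimes\id)\,T$ is again an integral operator whose kernel has the form $\scal{\psi_1(x_1,x_2)}{\psi_2(y_1,y_2)}$ with explicit $\psi_1,\psi_2\in\ldue{\Omega_1\times\Omega_2;\vv}$. A direct application of the trace formula from Lemma~\ref{lem:1} then produces the double integral that collapses to the asserted kernel. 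Your route---building $S$ from the auxiliary space $\mathcal{W}=\ldue{\Omega_2;\vv}$ and matching $S=\trped{2}{T}$ via an ONB/Parseval computation---is more elementary in spirit and has the minor advantage of exhibiting the target operator $S$ as positive trace class \emph{before} identifying it with $\trped{2}{T}$. The paper's route is shorter and avoids the bookkeeping of the double basis expansion (in $\vv$ and in $\ldue{\Omega_2}$), since the trace formula of Lemma~\ref{lem:1} absorbs that work. Both are valid; the only place where you should be a bit more explicit is the ``joint Fubini/Parseval interchange'': it is cleanest to note that $\sum_k\langle f_1\otimes e_k\,|\,T(g_1\otimes e_k)\rangle$ converges absolutely because $T\geq 0$ is trace class (polarize to reduce to $f_1=g_1$), which legitimizes moving the sum inside the integrals.
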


\begin{proof}
$T\in\ti{\ldue{\Omega_1\times\Omega_2,\mu_1\otimes\mu_2}}_+$ by Lemma \ref{lem:1}. If $f\in\ldue{\Omega_1,\mu_1}$, then
\begin{equation}\label{eq:1}
\scal{\trped{2}{T} \, f}{f} = \tr{(T_f\otimes \id_{\ldue{\Omega_2,\mu_2}})\, T} .
\end{equation}
The operator $(T_f\otimes \id_{\ldue{\Omega_2,\mu_2}})\, T$ is the integral operator with kernel
$$
\lfq (T_f\otimes \id_{\ldue{\Omega_2,\mu_2}})\, T \rgq] (x_1 , x_2 ; y_1 , y_2) = \scal{f(x_1) \int \overline{f(z)} \phi(z , x_2) \de\mu_1 (z)}{\phi(y_1 , y_2)} ,
$$
where it is easy to check that the map $(x_1 , x_2) \mapsto f(x_1) \int \overline{f(z)} \phi(z , x_2) \de\mu_1 (z)$ is in $\ldue{\Omega_1\times\Omega_2, \mu_1\otimes\mu_2 ; \vv}$ by an application of H\"older inequality and Fubini theorem. Then, by eq.~\eqref{eq:1} and Lemma \ref{lem:1},
\begin{eqnarray*}
\scal{\trped{2}{T} \, f}{f} & = & \int \scal{f(x_1) \int \overline{f(z)} \phi(z , x_2) \de\mu_1 (z)}{\phi(x_1 , x_2)} \de (\mu_1 \otimes \mu_2) (x_1 , x_2) \\
& = & \iint \lfq \int \scal{\phi(z , x_2)}{\phi(x_1 , x_2)} \de\mu_2 (x_2) \rgq f(x_1) \overline{f(z)} \de\mu_1 (x_1) \de\mu_1 (z) .
\end{eqnarray*}
This shows that $\trped{2}{T}$ is the integral operator in $\ldue{\Omega_1,\mu_1}$ with kernel \eqref{eq:ker-part-tr}, as claimed.
\end{proof}

We recall that $\hh = \ldue{G,\lam}$, where $\lambda$ is the Haar measure of $G$.
\begin{lemma}\label{lem:2}
Let $\vv$ be an Hilbert space and $\mu$ be a positive measure on $G$. Suppose $\phi_1,\phi_2\in\ldue{G\times G, \lam\otimes\mu ; \vv}$. Then there is an element $M\in\luno{G,\mu;\trh}$ such that
\begin{equation}\label{K<-->phi}
[M(h)](x;y) = \scal{\phi_1 (x,h)}{\phi_2 (y,h)} \quad \textrm{for } \mu\otimes\lam\otimes\lam \textrm{-a.~a.~} (h,x,y) .
\end{equation}

Moreover,
\begin{equation}\label{tr M(h)}
\tr{M(h)} = \int \scal{\phi_1 (x,h)}{\phi_2 (x,h)} \de\lam (x) \quad \textrm{for } \mu \textrm{-a.~a.~} h .
\end{equation}

If $\phi_1 = \phi_2 = \phi$ and $\no{\phi}_{L^2} = 1$, then $M\in\luno{G,\mu;\trh}_1$.
\end{lemma}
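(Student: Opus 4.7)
The plan is to use Fubini to slice $\phi_1,\phi_2$ in the variable $h$, to apply Lemma~\ref{lem:1} pointwise so as to define $M(h)$, and then to verify that the resulting map is Bochner integrable into $\trh$.

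By Fubini, for $\mu$-a.a.~$h\in G$ the slice $\phi_i^h := \phi_i(\cdot,h)$ lies in $\ldue{G,\lam;\vv}$, with $\int \no{\phi_i^h}_{L^2}^2 \de\mu(h) = \no{\phi_i}_{L^2}^2 < \infty$. For each such $h$, Lemma~\ref{lem:1} produces a unique $M(h)\in\trh$ with kernel $\scal{\phi_1^h(x)}{\phi_2^h(y)}$, and eq.~\eqref{tr M(h)} drops out of the trace identity of Lemma~\ref{lem:1} combined with Fubini. Writing $M(h) = A_1(h) A_2(h)^\ast$ with $\no{A_i(h)}_{HS} = \no{\phi_i^h}_{L^2}$ as in the proof of Lemma~\ref{lem:1}, one gets the pointwise bound $\no{M(h)}_1 \leq \no{\phi_1^h}_{L^2} \no{\phi_2^h}_{L^2}$, and Cauchy--Schwarz in $\luno{G,\mu}$ yields
$$
\int \no{M(h)}_1 \de\mu(h) \leq \no{\phi_1}_{L^2} \no{\phi_2}_{L^2} < \infty \, ,
$$
which takes care of the Bochner-norm finiteness.

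The main obstacle is strong measurability of $h\mapsto M(h)$ into $\trh$. I would first verify it on simple tensors $\phi_i(x,h) = f_i(x,h)\, v_i$ with $f_i\in \ldue{G\times G,\lam\otimes\mu}$ and $v_i\in\vv$: a direct calculation yields $M(h) = \scal{v_1}{v_2}\, \kb{f_1(\cdot,h)}{f_2(\cdot,h)}$, and since the slices $h\mapsto f_i(\cdot,h)\in\hh$ are strongly $\mu$-measurable by Fubini, continuity of the bilinear map $(g_1,g_2)\mapsto \kb{g_1}{g_2}$ from $\hh\times\hh$ to $\trh$ makes $h\mapsto M(h)$ strongly measurable. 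By linearity this extends to finite linear combinations of such simple tensors, which are dense in $\ldue{G\times G,\lam\otimes\mu;\vv}$ (separability of $\vv$ being harmless, since all Hilbert spaces in the paper are assumed separable). For arbitrary $\phi_1,\phi_2$, approximate them in $L^2$ by such combinations $\phi_i^{(n)}$; the norm bound of the previous paragraph applied to the differences shows that the corresponding operators $M^{(n)}$ converge to $M$ in $\luno{G,\mu;\trh}$, so $M$ is strongly measurable as a norm-limit of strongly measurable functions.

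Combining these steps gives $M\in \luno{G,\mu;\trh}$ with the claimed kernel and trace formulas. For the final clause, if $\phi_1=\phi_2=\phi$ with $\no{\phi}_{L^2}=1$, then the positivity part of Lemma~\ref{lem:1} yields $M(h)\geq O$ for $\mu$-a.a.~$h$, and integrating \eqref{tr M(h)} against $\mu$ produces
$$
\int \tr{M(h)} \de\mu(h) = \iint \scal{\phi(x,h)}{\phi(x,h)} \de\lam(x) \de\mu(h) = \no{\phi}_{L^2}^2 = 1 \, ,
$$
so $M\in\luno{G,\mu;\trh}_1$.
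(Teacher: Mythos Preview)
Your argument is correct in outline and reaches the same conclusion, but your route to strong measurability of $h\mapsto M(h)$ is genuinely different from the paper's. The paper first reduces to the case $\phi_1=\phi_2=\phi$ by polarization, and then argues \emph{weak} measurability: it checks directly that $h\mapsto \scal{M(h)f}{f}=\no{\int f(x)\phi(x,h)\de\lam(x)}^2$ is measurable for every $f\in\hh$, extends to $h\mapsto\tr{AM(h)}$ for all $A\in\lh$ by approximation with finite-rank operators, and then invokes a Pettis-type theorem (Hille--Phillips, Corollary~2, p.~73) to upgrade to strong measurability in the separable space $\trh$. Your approach instead builds $M$ as an $\luno{G,\mu;\trh}$-limit of the explicitly measurable operators coming from simple tensors. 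The paper's method is shorter and avoids any approximation bookkeeping, at the price of citing an outside measurability theorem; your method is more self-contained and handles the non-symmetric case $\phi_1\neq\phi_2$ directly, but requires tracking the bilinear estimate through the limit.

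One small point to tighten: when you write that ``$M^{(n)}$ converge to $M$ in $\luno{G,\mu;\trh}$, so $M$ is strongly measurable,'' the statement $M^{(n)}\to M$ in $L^1$ presupposes that $h\mapsto\no{M(h)-M^{(n)}(h)}_1$ is measurable, which is part of what you are proving. The clean fix is either to observe that $(M^{(n)})$ is Cauchy in $\luno{G,\mu;\trh}$ and identify its limit with $M$ pointwise, or---more directly---to pass to a subsequence along which $\phi_i^{(n)}(\cdot,h)\to\phi_i(\cdot,h)$ in $\ldue{G,\lam;\vv}$ for $\mu$-a.a.~$h$, and use continuity of $(\psi_1,\psi_2)\mapsto M_{\psi_1,\psi_2}$ from $\ldue{G,\lam;\vv}^2$ to $\trh$ to conclude $M^{(n)}(h)\to M(h)$ in trace norm almost everywhere. (Also, ``Cauchy--Schwarz in $\luno{G,\mu}$'' should read ``in $\ldue{G,\mu}$''.)
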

\begin{proof}
We prove the lemma for $\phi_1 = \phi_2 = \phi$, the general case following by polarization.

By Fubini theorem, there exists a $\mu$-null set $Z\in\bor{G}$ such that $\phi (\cdot,h)\in\ldue{G,\lam;\vv}$ for all $h\in G\setminus Z$. For such $h$'s, eq.~\eqref{K<-->phi} defines an element $M(h)\in\trh_+$, and
$$
\tr{M(h)} = \no{M(h)}_1 = \no{\phi (\cdot,h)}_{L^2} = \int \no{\phi (x,h)}^2 \de\lam (x)
$$
by Lemma \ref{lem:1}.

For all $f\in\hh$,
$$
\scal{M(h) f}{f} = \no{\int f(x) \phi(x,h) \de\lam (x)}^2 \quad \forall h\in G\setminus Z .
$$
Since the map $h\mapsto \int f(x) \phi(x,h) \de\lam (x)$ is measurable from $G$ into $\vv$ by Fubini theorem, the map $h\mapsto \scal{M(h) f}{f}$ is measurable. The map $h\mapsto \tr{AM(h)}$ is then measurable for every finite rank operator $A\in\lh$, hence is measurable for all $A\in\lh$ by sequential weak-* density of finite rank operators in $\lh$. By Corollary 2, p.~73 in \cite{Hille-Ph}, $M:G\frecc\trh$ is then a measurable map. Moreover,
$$
\int \no{M(h)}_1 \de\mu (h) = \iint \scal{\phi(x,h)}{\phi(x,h)} \de\lam (x) \de\mu (h) = \no{\phi}^2_{L^2} .
$$
This shows that $M\in\luno{G,\mu;\trh}_+$, and, if $\no{\phi}^2_{L^2} = 1$, then $M\in\luno{G,\mu;\trh}_1$.
\end{proof}

For each $\omega\in\sh$, recall the definition of the $W$-covariant instrument $\ii^\omega$ on $G$ given in eq.~\eqref{eq:standard-ins}:
\begin{equation*}
\ii^\omega_X (T) = \trped{2}{(\id\otimes \Ao(X)) L (T \otimes \omega) L^\ast} \quad \forall X\in\bor{G},\, T\in\trh \, .
\end{equation*}

\begin{lemma}
If $f\in\hh$ and $X\in\bor{G}$, then $\ii^\omega_X (T_f)$ is the integral operator on $\hh$ with kernel
\begin{equation}\label{eq:iihX(Pf)}
\lfq \ii^\omega_X (T_f) \rgq (x;y) = f(x) \overline{f(y)} \tr{\Ao (X) U_x \omega U_y^\ast} .
\end{equation}
\end{lemma}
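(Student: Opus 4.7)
My plan is to first reduce to the case of a pure probe state, then make $L$ act explicitly on wave functions, and finally recognize the partial trace as an integral against the kernel of $U_x \omega U_y^\ast$.

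\medskip

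\noindent\textbf{Step 1 (Reduction to pure $\omega$).} Both sides of \eqref{eq:iihX(Pf)} are linear and norm-continuous in $\omega\in\trh$: the left-hand side because $\ii^\omega_X$ is plainly linear in $\omega$ and satisfies the trace-norm estimate $\no{\ii^\omega_X(T_f)}_1\le \no{T_f}_1 \no{\omega}_1$, and the right-hand side because $\omega\mapsto \tr{\Ao(X) U_x \omega U_y^\ast}$ is a continuous linear functional. Since every state decomposes as a norm-convergent convex sum of pure states, it is enough to verify \eqref{eq:iihX(Pf)} when $\omega = \kb{\varphi}{\varphi}$ for some unit vector $\varphi\in\hh$.

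\medskip

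\noindent\textbf{Step 2 (Explicit kernel before the partial trace).} Under the identification $\hh\otimes\hh = \ldue{G\times G,\lam\otimes\lam}$, the definition \eqref{eq:def:L} gives $[L(f\otimes\varphi)](x,y) = f(x)\varphi(y-x)$. Since $T_f\otimes\omega = \kb{f\otimes\varphi}{f\otimes\varphi}$, the operator $L(T_f\otimes\omega)L^\ast$ is the rank-one projection onto $L(f\otimes\varphi)$. Multiplying on the left by $\id\otimes\Ao(X)$, which acts on $\ldue{G\times G}$ as multiplication by $1_X$ in the second variable, one obtains that $(\id\otimes\Ao(X))L(T_f\otimes\omega)L^\ast$ is the integral operator on $\hh\otimes\hh$ with kernel
\begin{equation*}
K\bigl((x_1,x_2);(y_1,y_2)\bigr) = 1_X(x_2)\,f(x_1)\varphi(x_2-x_1)\,\overline{f(y_1)\varphi(y_2-y_1)} \, .
\end{equation*}

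\medskip

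\noindent\textbf{Step 3 (Partial trace and identification).} I now apply (a polarized form of) Lemma \ref{lem:1.5} with $\vv=\C$, $\Omega_1=\Omega_2=G$, $\mu_1=\mu_2=\lam$. Although Lemma \ref{lem:1.5} is stated for positive operators of the form $\scal{\phi(\cdot)}{\phi(\cdot)}$, the partial-trace identity is bilinear in the pair of $L^2$ factors and extends by polarization to kernels of the form $\phi_1(x_1,x_2)\overline{\phi_2(y_1,y_2)}$, yielding the kernel $\int \phi_1(x_1,z)\overline{\phi_2(y_1,z)}\de\lam(z)$ for the partial trace over the second factor. Applying this with $\phi_1(x_1,x_2)=1_X(x_2) f(x_1)\varphi(x_2-x_1)$ and $\phi_2(y_1,y_2)=f(y_1)\varphi(y_2-y_1)$ gives
\begin{equation*}
[\ii^\omega_X(T_f)](x;y) = f(x)\overline{f(y)}\int 1_X(z)\,\varphi(z-x)\overline{\varphi(z-y)}\de\lam(z) \, .
\end{equation*}

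\medskip

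\noindent\textbf{Step 4 (Recognizing the trace).} The remaining integral equals $\ip{U_y\varphi}{\Ao(X)U_x\varphi}$ by definition of $U$ and $\Ao$ in \eqref{eq:A}, and this in turn equals $\tr{\Ao(X)\kb{U_x\varphi}{U_y\varphi}} = \tr{\Ao(X)U_x\omega U_y^\ast}$. This produces \eqref{eq:iihX(Pf)} in the pure case and, by Step 1, in general. The only non-routine point is the polarization of Lemma \ref{lem:1.5}, but since that lemma already gives the formula on the diagonal of a sesquilinear expression, no new work is required beyond standard bookkeeping.
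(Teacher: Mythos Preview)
Your proof is correct and follows essentially the same route as the paper's: compute the explicit kernel of $(\id\otimes\Ao(X))L(T_f\otimes\omega)L^\ast$, apply (a polarized form of) Lemma~\ref{lem:1.5} to read off the partial trace, and identify the resulting integral as $\tr{\Ao(X)U_x\omega U_y^\ast}$. The only cosmetic difference is that the paper handles a general state $\omega$ in one shot by invoking Lemma~\ref{lem:1} to write $\omega(x;y)=\scal{\phi(x)}{\phi(y)}$ with a $\vv$-valued $\phi$, whereas you first reduce to pure $\omega$ (so $\vv=\C$) and recover the general case by the linearity/continuity argument in Step~1; both versions rely on the same implicit polarization of Lemma~\ref{lem:1.5}.
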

\begin{proof}
By Lemma \ref{lem:1}, there exists a Hilbert space $\vv$ and a function $\phi\in\ldue{G,\lam;\vv}$ with $\no{\phi}_{L^2} = 1$ such that
$$
\omega(x;y) = \scal{\phi(x)}{\phi(y)} \quad \textrm{for } \lam\otimes\lam \textrm{ - a.~a.~} (x,y) \,.
$$
The operator $(\id\otimes \Ao(X)) L (T_f \otimes \omega) L^\ast$ is then the integral operator on $\ldue{G\times G,\lam\otimes\lam}$ with kernel
\begin{eqnarray*}
&& \lfq (\id\otimes \Ao(X)) L (T_f \otimes \omega) L^\ast \rgq (x_1,x_2 ; y_1,y_2)   = \\
&& \qquad \qquad \qquad \qquad = \scal{f(x_1) 1_X (x_2) \phi(x_2 - x_1)}{f(y_1) \phi(y_2 - y_1)} \, .
\end{eqnarray*}
By Lemmas \ref{lem:1} and \ref{lem:1.5}
\begin{eqnarray*}
\lfq \ii^\omega_X (T_f) \rgq (x;y) & = & f(x)  \overline{f(y)} \int \scal{1_X (z) \phi(z - x)}{\phi(z - y)} \de\lam (z) \\
& = & f(x) \overline{f(y)} \tr{\Ao (X) U_x \omega U_y^\ast} \, ,
\end{eqnarray*}
since $\Ao (X) U_x \omega U_y^\ast$ is the integral operator on $\hh$ with kernel
$$
[\Ao (X) U_x \omega U_y^\ast] (z;t) = \scal{1_X (z) \phi(z - x)}{\phi(t - y)} \, . 
$$
\end{proof}

If $\Mo\in\mm(G;\trh)$, with $\de\Mo (x) = M(x) \de\mu (x)$, its Fourier transform is
$$
\ff \Mo (\gamma) = \int \overline{\gamma (x)} M(x) \de\mu (x) \quad \forall \gamma\in\hG ,
$$
where the integral is defined in the sense of Bochner. Note that $\ff \Mo$ is a continuous map from $\hG$ into $\trh$.

\begin{lemma}\label{lem:inj}
Let $\Mo_1 , \Mo_2 \in\mm(G;\trh)$. If
\begin{equation}\label{eq:inj}
\tr{V_\chi U_x \ff \Mo_1 (\gamma)} = \tr{V_\chi U_x \ff \Mo_2 (\gamma)} \quad \forall x\in G,\, \forall\chi,\gamma\in\hG ,
\end{equation}
then $\Mo_1 = \Mo_2$.
\end{lemma}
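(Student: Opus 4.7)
The plan is to reduce the claim to two Fourier-uniqueness arguments. Set $\Delta=\Mo_1-\Mo_2\in\mm(G;\trh)$, so that the hypothesis reads $\tr{V_\chi U_x\,\ff\Delta(\gamma)}=0$ for every $x\in G$ and $\chi,\gamma\in\hG$. The goal is to conclude $\Delta(Y)=0$ for every $Y\in\bor{G}$. I will first eliminate the $\gamma$-variable (Fourier uniqueness on $\hG$ applied to a scalar Borel measure on $G$) and then eliminate the operator itself (Fourier uniqueness on $\hG$ applied to the kernel of a trace class operator acting on $\ldue{G}$).

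For the first step, fix $x\in G$ and $\chi\in\hG$ and introduce the complex Borel measure $\mu_{x,\chi}(Y):=\tr{V_\chi U_x\Delta(Y)}$ on $G$. Its total variation is finite, since $|\mu_{x,\chi}|(G)\leq \no{V_\chi U_x}_\infty |\Delta|(G)<\infty$. Because Bochner integration commutes with the continuous linear functional $T\mapsto\tr{V_\chi U_x T}$ on $\trh$, the Fourier transform of $\mu_{x,\chi}$ at $\gamma$ equals $\tr{V_\chi U_x\,\ff\Delta(\gamma)}$, which vanishes by hypothesis. Classical Fourier uniqueness for bounded complex Borel measures on the lcsc abelian group $G$ then forces $\mu_{x,\chi}\equiv 0$, that is,
\begin{equation*}
\tr{V_\chi U_x \Delta(Y)}=0\qquad\forall\,Y\in\bor{G},\ x\in G,\ \chi\in\hG.
\end{equation*}

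For the second step, fix $Y\in\bor{G}$ and set $T:=\Delta(Y)\in\trh$; the task is to deduce $T=0$ from $\tr{V_\chi U_x T}=0$ for all $x,\chi$. By Stone--von Neumann, I may work in the concrete realization $\hh=\ldue{G}$, where $T$ is Hilbert--Schmidt with kernel $T(\cdot;\cdot)\in\ldue{G\times G}$. A direct computation using $V_\chi U_x f(y)=\chi(y)f(y-x)$ yields
\begin{equation*}
\tr{V_\chi U_x T}=\int \chi(y)\,T(y-x;y)\,\de\lam(y),
\end{equation*}
which is the Fourier transform at $\chi\in\hG$ of the function $y\mapsto T(y-x;y)$. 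Its vanishing for every $\chi$ forces $T(y-x;y)=0$ for a.e.~$y$; by a translation-invariant change of variables and Fubini, the kernel itself vanishes a.e.~on $G\times G$, so $T=0$. Applied for every $Y\in\bor{G}$ this gives $\Mo_1=\Mo_2$. The main obstacle is precisely this last step, which amounts to the weak-$^*$ density in $\lh$ of the linear span of the Weyl operators $\{V_\chi U_x\}$; this density is not a formal consequence of irreducibility of the Schr\"odinger representation, and the explicit kernel computation in the concrete realization appears to be the most efficient way to establish it.
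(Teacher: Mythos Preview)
Your proof is correct and takes a genuinely different route from the paper's. The paper first establishes a \emph{reconstruction formula}
\[
\int \tr{V_\chi U_x T}\,\scal{U_x^\ast V_\chi^\ast f_1}{f_2}\,\de(\lam\otimes\hlam)(x,\chi)=\tfrac{1}{c^2}\scal{Tf_1}{f_2}\qquad\forall\,T\in\trh,
\]
a manifestation of the square-integrability of the Schr\"odinger representation, and applies it with $T=\ff\Mo_i(\gamma)$ to conclude $\ff\Mo_1=\ff\Mo_2$, whence $\Mo_1=\Mo_2$ by injectivity of the vector-valued Fourier transform. You instead peel off the $\gamma$-variable first via scalar Fourier uniqueness, reducing to the statement ``$\tr{V_\chi U_x T}=0$ for all $x,\chi$ implies $T=0$'', which you then settle by an explicit kernel computation in $\ldue{G}$ rather than by an inversion formula. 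Your route is more elementary---it uses only classical Fourier uniqueness for $L^1$-functions and for bounded measures---while the paper's is more conceptual and transfers to any square-integrable irreducible representation. One point worth tightening: the identity $\tr{V_\chi U_x T}=\int\chi(y)\,T(y-x;y)\,\de\lam(y)$ is only valid for a suitably chosen representative of the kernel (the diagonal of a generic $L^2$-kernel is ill-defined); the factorisation $T(s;t)=\scal{\phi_1(s)}{\phi_2(t)}$ with $\phi_i\in\ldue{G,\lam;\vv}$ provided by the paper's Lemma~\ref{lem:1} supplies such a representative and simultaneously shows $y\mapsto T(y-x;y)\in L^1(G)$, which is what your second Fourier-uniqueness step actually needs. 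As a side remark, the separating property you isolate \emph{does} follow from irreducibility once one observes that the Weyl commutation relation makes the linear span of $\{V_\chi U_x\}$ already a $\ast$-algebra, so the bicommutant theorem gives weak-$\ast$ density in $\lh$; your kernel argument is simply a direct, self-contained alternative.
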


\begin{proof}
We first prove the following reconstruction formula\footnote{Eq.~\eqref{eq:reconstr} follows directly from square integrability of the Schr\" odinger representation $W$ of the Weyl-Heisenberg group $\WH$, see \cite[\S 5]{Borel72} for the definition of square integrable representations and for more details on this topic.} for elements $T\in\trh$:
\begin{equation}\label{eq:reconstr}
\int \tr{V_\chi U_x T} \scal{U_x^\ast V_\chi^\ast f_1}{f_2} \de (\lam \otimes \hlam) (x,\chi) = \frac{1}{c^2} \scal{Tf_1}{f_2} \quad \forall f_1 , f_2 \in\hh .
\end{equation}
To do this, choose $\phi_1, \phi_2 \in \ldue{G,\lam;\vv}$ such that $T(x;y) = \scal{\phi_1 (x)}{\phi_2 (y)}$. Then
$$
(V_\chi U_x T)(z;t) = \scal{\chi (z) \phi_1 (z-x)}{\phi_2 (t)} ,
$$
and Lemma \ref{lem:1} yelds
$$
\tr{V_\chi U_x T} = \int \chi (z) \scal{\phi_1 (z-x)}{\phi_2 (z)} \de \lam(z) = \frac{1}{c} [(\ff \otimes \id) \Phi_{\phi_1, \phi_2}] (\chi^{-1} , x) ,
$$
where $\Phi_{\phi_1, \phi_2} \in \ldue{G\times G , \lam\otimes \lam} = \ldue{G , \lam} \otimes \ldue{G , \lam}$ is the function $\Phi_{\phi_1, \phi_2} (z,x) = \scal{\phi_1 (z-x)}{\phi_2 (z)}$. In a similar way, one obtains
$$
\scal{V_\chi U_x f_2}{f_1} = \frac{1}{c} [(\ff \otimes \id) \Phi_{f_2, f_1}] (\chi^{-1} , x) ,
$$
with $\Phi_{f_2, f_1} \in \ldue{G\times G , \lam\otimes \lam}$ given by $\Phi_{f_2,f_1} (z,x) = f_2 (z-x) \overline{f_1 (z)}$. By unitarity of Fourier transform we then have
\begin{eqnarray*}
&&\int \tr{V_\chi U_x T} \scal{U_x^\ast V_\chi^\ast f_1}{f_2} \de (\lam \otimes \hlam) (x,\chi) \\
&& \qquad \qquad = \frac{1}{c^2} \int [(\ff \otimes \id) \Phi_{\phi_1, \phi_2}] (\chi^{-1} , x) \overline{[(\ff \otimes \id) \Phi_{f_2, f_1}] (\chi^{-1} , x)} \de (\lam \otimes \hlam) (x,\chi) \\
&& \qquad \qquad = \frac{1}{c^2} \int \Phi_{\phi_1, \phi_2} (z , x) \overline{\Phi_{f_2, f_1} (z , x)} \de (\lam \otimes \lam) (x,z) \\
&& \qquad \qquad = \frac{1}{c^2} \int \scal{\phi_1 (x)}{\phi_2 (z)} \overline{f_2 (x)} f_1 (z) \de (\lam \otimes \lam) (x,z) \\
&& \qquad \qquad = \frac{1}{c^2} \scal{Tf_1}{f_2} ,
\end{eqnarray*}
which is formula \eqref{eq:reconstr}.

Now, if eq.~\eqref{eq:inj} holds for $\Mo_1 , \Mo_2 \in\mm(G;\trh)$, then, replacing $T$ with $\ff \Mo_i (\gamma)$ in reconstruction formula \eqref{eq:reconstr}, we see that $\scal{\ff \Mo_1 (\gamma) f_1}{f_2} = \scal{\ff \Mo_2 (\gamma) f_1}{f_2}$ for all $f_1,f_2\in\ldue{G,\lam}$ and $\gamma\in\hG$. Thus, $\ff \Mo_1 = \ff \Mo_2$, and $\Mo_1 = \Mo_2$ follows by injectivity of Fourier transform (see e.~g.~\cite[Theorem 4.33]{CAHA95}).
\end{proof}

After all this preparation, we are now in position to prove our main Theorem \ref{th:gen}.

\begin{proof}[Proof of Theorem \ref{th:gen}]
We divide the proof into several steps.

1) If $\Mo\in\mm (G;\trh)_1$, with $\de\Mo(x) = M(x) \de \mu (x)$, then it is easily checked that the expression under the integral in eq.~\eqref{eq:gen} is in $\luno{G,\mu;\ti{\hh\otimes\hh}}$. Looking at the equivalent formula \eqref{eq:genterza}, one immediately concludes that $\ii$ defined in eq.~\eqref{eq:gen} is a $W$-covariant instrument on $G$, since each $\ii^{\omega (x)}$ is.

2) Conversely, suppose $\ii$ is a $W$-covariant instrument on $G$. By Theorem 1 in \cite{CaHeTo09} this is equivalent to assume that there exist\footnote{We refer to Chapter VI of \cite{GQT85} for more details on systems of imprimitivity, induced representations and the Imprimitivity Theorem.}
\begin{itemize}
\item a transitive system of imprimitivity $(D,\Ro)$ for $\WH
$ based on $G\simeq \WH/N$ and acting in a Hilbert space $\kk$
\item an isometry $L:\hh \frecc \hh\otimes\kk$ intertwining the representations $W$ and $W\otimes D$
\end{itemize}
such that
\begin{equation}\label{eq:struttura:I}
\ii_X (T) = \trped{2}{\lft \id_\hh \otimes \Ro (X) \rgt L T L^\ast} \quad \forall T\in\trh \, .
\end{equation}
Moreover, the system of imprimitivity $(D,\Ro)$ and the isometry $L$ can be chosen in such a way that the set 
\begin{equation}\label{eq:total}
\left\{ (A\otimes \Ro (X)) L v \mid X\in \bor{G},\, A\in \lh,\, v\in\hh \right\}
\end{equation}
is total in $\hh \otimes \kk$.

For $z=(0,1,u) \in Z$, we have
\begin{eqnarray*}
(A \otimes \Ro(X)) Lf & = & u (A \otimes \Ro(X)) LW(z)f \\
& = & u (A \otimes \Ro(X)) (W(z) \otimes D(z)) Lf \\
& = & u (W(z) \otimes D(z)) (A \otimes \Ro(z^{-1} X)) Lf \\
& = & (\id_\hh \otimes D(z)) (A \otimes \Ro(X)) Lf \, .
\end{eqnarray*}
By totality of the set \eqref{eq:total} in $\hh\otimes \kk$, $\id_\hh \otimes D(z) = \id_{\hh\otimes\kk}$, i.~e.~$\left. D \right|_Z = \id_\kk$. Therefore, $D$ factors to a representation $\tilde{D}$ of the abelian group $\WH
/Z \simeq G\times \hG$.

The couple $(\tilde{D}, \Ro)$ is a transitive system of imprimitivity for the group $G\times \hG$ based on $G$, where the action of $G\times \hG$ on $G\simeq (G\times \hG)/\hG$ is
$$
(x,\chi)[y] = x+y \quad \forall (x,\chi)\in G\times \hG ,\, y\in G \, .
$$
By the Imprimitivity Theorem, $(\tilde{D}, \Ro)$ is the system of imprimitivity induced by some representation $\sigma$ of the group $\hG$. Possibly enlarging the representation $\tilde{D}$ (thus dropping the requirement that the set \eqref{eq:total} is total in $\kk\otimes \hh$, but preserving eq.~\eqref{eq:struttura:I}), we can assume that the representation $\sigma$ has constant infinite multiplicity, i.~e.~there exists a positive measure $\mu$ on $\widehat{\hG} = G$ and a separable infinite dimensional Hilbert space $\vv$ such that $\sigma$ acts on the space $\ldue{G, \mu ; \vv}$ as follows
\begin{equation*}
[\sigma (\chi) \phi] (h) = \chi(h) \phi (h) \quad \forall \phi \in \ldue{G, \mu ; \vv} \, .
\end{equation*}
(see e.~g.~\cite[Theorem 7.40]{CAHA95}). 
With this assumption, the inducing construction gives
\begin{equation*}
\kk = \ldue{G , \lambda ; \ldue{G, \mu ; \vv}} = \ldue{G\times G , \lambda \otimes \mu ; \vv} \, , 
\end{equation*}
and
\begin{eqnarray*}
[\tilde{D} (t,\chi) f] (y,h) & = & \chi(h) f(y-t, h) \\
\left[ \Ro(X) f \right] (y,h) & = & 1_X (y) f(y, h)
\end{eqnarray*}
for all $f\in \kk$ (see Theorem 6.7 in \cite{GQT85}).

Collecting these facts, we obtain that
\begin{equation*}
\hh\otimes \kk = \ldue{G , \lambda} \otimes \ldue{G\times G , \lambda \otimes \mu ; \vv} = \ldue{G\times G\times G , \lambda \otimes \lambda \otimes \mu ; \vv}
\end{equation*}
with
\begin{eqnarray*}
\left[(W\otimes D) (t,\chi,u) f \right] (x,y,h) & = & \overline{u} \chi(x + h - t) f(x-t,y-t, h) \\
\left[(\id_\hh \otimes \Ro(X)) f \right] (x,y,h) & = & 1_X (y) f(x,y,h)
\end{eqnarray*}
for all $f\in \hh\otimes \kk$, $(t,\chi,u)\in \WH
$, $X\in\bor{G}$.

We define a unitary operator $S$ on $\hh\otimes \kk$ by
\begin{equation*}
Sf (x,y,h) = f (x+h, y-x, h) \, .
\end{equation*}
It is easy to check that
\begin{equation*}
(W\otimes D)(t,\chi,u) \, S = S\, (W(t,\chi,u)\otimes \id_\kk) \, ,
\end{equation*}
i.~e.~$S$ intertwines $W\otimes \id_\kk$ with $W\otimes D$. On the other hand, by irreducibility of $W$, every isometry $R : \hh \frecc \hh \otimes \kk$ intertwining $W$ with $W\otimes \id_\kk$ has the form
\begin{equation*}
Rf = f\otimes \phi =: R_\phi f \quad \forall f\in\hh
\end{equation*}
for some choice of $\phi \in \kk$ with $\no{\phi} = 1$, fixed by $R = R_\phi$. Combining these two facts, every isometry $L : \hh \frecc \hh\otimes\kk$ intertwining $W$ with $W\otimes D$ is given by $L = SR_\phi =: L_\phi$ for some choice of $\phi$ as before. Explicitely,
\begin{equation*}
(L_\phi f) (x,y,h) = f(x+h) \phi (y-x, h) \, .
\end{equation*}

We now evaluate the expression in eq.~\eqref{eq:struttura:I}. For $f, f^\prime \in \hh$, we have
\begin{eqnarray*}
\scal{\ii_X(T_f) f^\prime}{f^\prime} & = & \tr{T_{f^\prime}\, \trped{2}{\lft \id_\hh \otimes \Ro(X) \rgt L_\phi T_f L_\phi^\ast}} \\
& = & \tr{\lft T_{f^\prime} \otimes \Ro(X) \rgt L_\phi T_f L_\phi^\ast} \\
& = & \scal{\lft T_{f^\prime} \otimes \Ro(X) \rgt L_\phi f}{L_\phi f} \\
& = & \int f^\prime (x) \overline{f^\prime (z)} f (z+h) \overline{f(x+h)} \scal{1_X (y) \phi (y-z,h)}{\phi (y-x,h)} \\
&& \times \de (\lam \otimes \lam \otimes \lam \otimes \mu) (x,y,z,h) .
\end{eqnarray*}
Let $M\in\luno{G,\mu;\trh}_1$ be defined as in eq.~\eqref{K<-->phi}, with $\phi_1 = \phi_2 = \phi$. Then
$$
[\Ao(X) U_z M(h) U_x^\ast ] (y;t) = \scal{1_X (y) \phi (y-z,h)}{\phi(t-x,h)} ,
$$
hence, by eq.~\eqref{tr M(h)},
\begin{eqnarray*}
\scal{\ii_X(T_f) f^\prime}{f^\prime} & = & \int f^\prime (x) \overline{f^\prime (z)} f (z+h) \overline{f(x+h)} \, \tr{\Ao (X) U_z M(h) U_x^\ast} \\
&& \times \de (\lam \otimes \lam \otimes \mu) (x,z,h) \\
& = & \int f^\prime (x-h) \overline{f^\prime (z-h)} f (z) \overline{f(x)} \, \tr{\Ao (X) U_z U_h^\ast M(h) U_h U_x^\ast} \\
&& \times \de (\lam \otimes \lam \otimes \mu) (x,z,h) \, .
\end{eqnarray*}
Defining $\omega(h) = U_h^\ast M(h) U_h / \no{M(h)}_1$, $\de\nu (h) = \no{M(h)}_1 \de\mu (h)$, by eq.~\eqref{eq:iihX(Pf)} the last expression is
\begin{equation*}
\scal{\ii_X (T_f) \, f^\prime}{f^\prime} = \int \scal{\ii^{\omega (h)}_X (T_f) \, U_h f^\prime}{U_h f^\prime} \de \nu (h) .
\end{equation*}
This equation clearly holds replacing $T_f$ with any finite rank operator, and then eq.~\eqref{eq:genterza} (which is equivalent to eq.~\eqref{eq:gen}) follows by density.

3) We finally show that the correspondence $\Mo \mapsto \ii$ estabilished in eq.~\eqref{eq:gen} is injective from $\mm(G;\trh)_1$ into the set of $W$-covariant instruments.

Suppose that $\de\Mo (x) = M (x) \de\mu (x)$ is a vector measure in $\mm(G;\trh)_1$, and let $\ii$ be the instrument associated to $\Mo$ by eq.~\eqref{eq:gen}. For all $T\in\trh$, let $\Mo^T\in\mm(G;\trh)$ be the vector measure
$$
\Mo^T (X) = \ii_X (T) .
$$
Rewriting $\ii$ in the form of eq.~\eqref{eq:genbis} and applying the second SNAG formula in \eqref{eq:SNAG}, we have
$$
\ff\Mo^T (\gamma^{-1}) = \int U_x^\ast \lfq \trped{2}{(\id\otimes V_\gamma) L (T \otimes M^\prime (x)) L^\ast} \rgq U_x \de\mu (x)
$$
for all $\gamma\in\hG$, where $M^\prime (x) = U_x^\ast M(x) U_x$. It follows that, if $y\in G$, $\chi\in\hG$,
\begin{eqnarray*}
\tr{V_\chi U^\ast_y \ff\Mo^T (\gamma^{-1})} & = & \int \overline{\chi(x)} \, \tr{(V_\chi U^\ast_y \otimes V_\gamma) L (T \otimes M^\prime (x)) L^\ast} \de\mu (x) \\
& = & \tr{(V_{\chi\gamma} U^\ast_y \otimes V_\gamma U_y) (T \otimes \ff \Mo^\prime (\chi))} \\
& = & \tr{V_{\chi\gamma} U^\ast_y T} \tr{V_\gamma U_y \ff \Mo^\prime (\chi)} \, ,
\end{eqnarray*}
where $\de\Mo^\prime (x) = M^\prime (x) \de\mu (x)$. Choose $T = U_y V^\ast_{\chi\gamma} \varrho$, with $\varrho \in\sh$. Then we have
$$
\tr{V_\chi U^\ast_y \ff\Mo^T (\gamma^{-1})} = \tr{V_\gamma U_y \ff \Mo^\prime (\chi)} .
$$
Therefore, $\ii$ determines the continuous mapping
$$
(\gamma,y,\chi) \mapsto \tr{V_\gamma U_y \ff \Mo^\prime (\chi)} ,
$$
hence it determines the vector measure $\Mo^\prime$, or, equivalently, $\Mo$ by Lemma \ref{lem:inj}.
\end{proof}

We now prove the consequences of Theorem \ref{th:gen} stated in Propositions \ref{prop:dens} and \ref{prop:di-S}.

\begin{proof}[Proof of Proposition \ref{prop:dens}]
For all $f\in\hh$ we have
\begin{eqnarray*}
\scal{\tilde{\Ao} (X) f}{f} & = & \tr{\tilde{\Ao} (X) T_f} = \tr{\ii_X (T_f)} \\
& = & \int \tr{U_x^\ast \left[ \ii_X^{\omega(x)} (T_f) \right] U_x} \de \nu(x) \\
& = & \tr{\ii^\omega_X (T_f)} ,
\end{eqnarray*}
where we used eq.~\eqref{eq:genterza} and set $\omega= \int \omega(x) \de \nu(x) = \Mo^\prime (G)$. 
By eq.~\eqref{eq:iihX(Pf)},
\begin{equation}\label{eq:*}
\lfq \ii^\omega_X (T_f) \rgq (x;y) = \scal{f(x) \Ao(X) U_x \omega^{1/2}}{f(y) U_y \omega^{1/2}}_{HS} ,
\end{equation}
where $\scal{A}{B}_{HS} = \tr{AB^\ast}$ is the scalar product in the Hilbert space of Hilbert-Schmidt operators on $\hh$. Using then Lemma \ref{lem:1} to evaluate the trace,
\begin{eqnarray*}
\scal{\tilde{\Ao} (X) f}{f}
& = & \int \scal{f(x) \Ao(X) U_x \omega^{1/2}}{f(x) U_x \omega^{1/2}}_{HS} \de\lam (x) \\
& = & \int |f(x)|^2 \tr{\Ao(X-x) \omega} \de\lam (x) ,
\end{eqnarray*}
from which eq.~\eqref{eq:dens-A} follows by comparison with the definition \eqref{eq:Afuzzy} of $\Ao_\sigma$.

For the observable $\tilde{\Bo}$ we have
\begin{eqnarray*}
\scal{\tilde{\Bo} (Y) f}{f} & = & \tr{\tilde{\Bo} (Y) T_f} = \tr{\Bo(Y) \ii_G (T_f)} \\
& = & \int \tr{\Bo(Y) U_x^\ast \left[ \ii_G^{\omega(x)} (T_f) \right] U_x} \de\nu (x) \\
& = & \int \tr{\Bo(Y) \ii_G^{\omega(x)} (T_f)} \de\nu (x) \\
& = & \tr{\Bo(Y) \ii_G^\omega (T_f)} .
\end{eqnarray*}
Combining the definition $\Bo(Y) f = c\, (\ff^{-1} 1_Y) \ast f$ and eq.~\eqref{eq:*}, we have (for $\hlam(Y) < \infty$)
$$
\lfq \Bo (Y) \ii^\omega_G (T_f) \rgq (x;y) = \scal{c \int \ff^{-1} 1_Y (x-z) f(z) U_{z} \omega^{1/2} \de\lam(z)}{f(y) U_y \omega^{1/2}}_{HS} .
$$
By Lemma \ref{lem:1}
\begin{eqnarray*}
\scal{\tilde{\Bo} (Y) f}{f} 
& = & \int \scal{c \int \ff^{-1} 1_Y (x-z) f(z) U_{z} \omega^{1/2} \de\lam(z)}{f(x) U_x \omega^{1/2}}_{HS} \de\lam(x) \\
& = & c \int \ff^{-1} 1_Y (x-z) f(z) \overline{f(x)} \tr{U_{z-x} \omega} \de(\lam\otimes\lam)(x,z) .
\end{eqnarray*}
Let $\beta_\omega$ be the measure on $\hG$ given by $\beta_\omega (Y) = \tr{B(Y) \omega}$ $\forall Y\in\bor{\hG}$.
Since
$$
\tr{U_z^\ast \omega} =\int \chi(z) \de \beta_\omega (\chi) = : \ff^{-1} \beta_\omega (z) ,
$$
we have
\begin{eqnarray*}
\scal{\tilde{\Bo} (Y) f}{f}
& = & c \int \ff^{-1} 1_Y (x-z) \ff^{-1} \beta_\omega (x-z) f(z) \overline{f(x)} \de (\lam\otimes\lam)(x,z) \\
& = & c \int \ff^{-1} (1_Y \ast \beta_\omega) (x-z) f(z) \overline{f(x)} \de (\lam\otimes\lam)(x,z) ,
\end{eqnarray*}
where $1_Y \ast \beta_\omega (\chi) = \int 1_Y (\chi \gamma^{-1}) \de \beta_\omega (\gamma)$. Comparing with the definition \eqref{eq:Bfuzzy} of $\Bo_\tau$, we see that
$$
\tau_Y = 1_Y \ast \beta_\omega ,
$$
hence
$$
\tau(Y) = \tau_Y (1) = \int 1_Y (\chi^{-1}) \de\beta_\omega (\chi) = \tr{\Bo (Y^{-1}) \omega} ,
$$
which is eq.~\eqref{eq:dens-B}.
\end{proof}

\begin{proof}[Proof of Proposition \ref{prop:di-S}]
Suppose $f,f^\prime \in\hh$, with $\no{f} = 1$, and let $\omega=T_f$. Then, for $X\in\bor{G}$, $Y\in\bor{\hG}$, we have
\begin{eqnarray*}
\scal{\ii^{\omega \ast}_X (\Bo (\hY)) \, f^\prime}{f^\prime} & = & \tr{\Bo (\hY) \, \ii^{\omega}_X (T_{f^\prime})} \\
& = & \tr{(\Bo (\hY)\otimes \Ao(X)) L (T_{f^\prime} \otimes T_f) L^\ast} \\
& = & \scal{(\hat{\Ao} (\hY) \otimes \Ao(X)) (\ff \otimes \id) L (f^\prime \otimes f)}{(\ff \otimes \id) L (f^\prime \otimes f)} \, ,
\end{eqnarray*}
where $\hat{\Ao} (Y) = \ff \Bo (Y) \ff^{-1}$ is the operator on $\hhat = \ldue{\hG,\hlam}$ given by $[\hat{\Ao} (Y) \hat{f}] (\chi) = 1_Y (\chi) \hat{f} (\chi)$.
Since
\begin{eqnarray*}
[(\ff \otimes \id) L (f^\prime \otimes f)] (\chi , x) & = & c \int \overline{\chi(y)} f^\prime (y) f(x-y) \de\lam (y) \\
& = & c \scal{f^\prime}{V_\chi U_x \check{f}} \, ,
\end{eqnarray*}
the above formula rewrites
\begin{eqnarray*}
\scal{\ii^{\omega \ast}_X (\Bo (\hY)) \, f^\prime}{f^\prime} & = & c^2 \int 1_X (x) 1_{\hY} (\chi) \left| \scal{f^\prime}{V_\chi U_x \check{f}} \right|^2 \de (\lam\otimes\hlam) (x,\chi) \\
& = & \scal{\Co_{\check{\omega}} (X\times \hY) \, f^\prime}{f^\prime} \, ,
\end{eqnarray*}
with $\check{\omega}$ defined in eq.~\eqref{eq:omegacheck}. By density of finite rank operators in $\sh$, the above equation extends by continuity to all $\omega\in\sh$. If $\ii$ is given by eq.~\eqref{eq:genterza}, we then have
\begin{eqnarray*}
\scal{\ii^\ast_X (\Bo (\hY)) \, f^\prime}{f^\prime} & = & \int \scal{\ii^{\omega (x) \ast}_X (U_x \Bo (\hY) U^\ast_x) \, f^\prime}{f^\prime} \de\nu (x) \\
& = & \int \scal{\ii^{\omega (x) \ast}_X (\Bo (\hY)) \, f^\prime}{f^\prime} \de\nu (x) \\
& = & \int \scal{\Co_{\check{\omega} (x)} (X\times \hY) \, f^\prime}{f^\prime} \de\nu (x) \\
& = & \scal{\Co_{\check{\omega}} (X\times \hY) \, f^\prime}{f^\prime} \de\nu (x) \, ,
\end{eqnarray*}
with $\omega = \int \omega (x) \de\nu (x) = \int M^\prime (x) \de\mu (x) = \Mo^\prime (G)$.
\end{proof}

\end{document}